\newcommand{\ceil}[1]{\lceil #1 \rceil}
\newcommand{\floor}[1]{\lfloor #1 \rfloor}
\newcommand{\bis}{\textnormal{\#BIS}}
\newcommand{\maxbis}{\textnormal{\#Max-BIS}}
\newcommand{\kbis}{\textnormal{\#}\textnormal{Size-BIS}}
\newcommand{\klbis}{\textnormal{\#}\textnormal{Size-Left-BIS}}
\newcommand{\klmaxbis}{\textnormal{\#}\textnormal{Size-Left-Max-BIS}} 
\newcommand{\apklmaxbis}[1]{\textnormal{Deg-{#1}-\#ApxSizeLeftMaxBIS}}
\newcommand{\IS}{\textnormal{IS}}
\newcommand{\MAXIS}{\textnormal{MIS}}
\newcommand{\LIS}[1]{\ensuremath{\textnormal{IS}_{\mbox{\tiny $#1$-left}}}}
\newcommand{\MAXLIS}[1]{\ensuremath{\textnormal{IS}_{\mbox{\tiny $#1$-left-max}}}}
\newcommand{\LRIS}[2]{\ensuremath{\mbox{IS}_{#1,\,#2}}}
\def\maxind{\mu}
\newcommand{\maxleft}[1]{\ensuremath{\mu_{\mbox{\tiny $#1$-left}}}}
\newcommand{\sub}{\textnormal{\#Induced-Coloured-Subgraph}$[\Delta]$}
\newcommand{\domset}{\textnormal{\#Size-Dominating-Set}}
\newcommand{\clique}{Size-Clique} 
\newcommand{\ckclique}{\textnormal{\#Size-Clique}}
\newcommand{\apklbis}[1]{\textnormal{Deg-{#1}-\#ApxSizeLeftBIS}} 
\newcommand{\apkbis}[1]{\textnormal{Deg-{#1}-\#ApxSizeBIS}}
\newcommand{\rainbow}{\textnormal{\#Size-Partitioned-Biclique}}
\newcommand{\NoClique}{\mbox{CL}}
\newcommand{\Wo}{\textnormal{W[1]}}
\newcommand{\Wt}{\textnormal{\#W[2]}}
\newcommand{\eps}{\varepsilon}
\newcommand{\R}{\mathbb{R}}
\newcommand{\N}{\mathbb{N}} 
\newcommand{\pr}{\mathbb{P}}
\newcommand{\E}{\mathbb{E}}
\newcommand{\cg}[1]{\mathcal{#1}} 
\newcommand{\subg}[2]{\mbox{Sub}_{#1}(#2)}
\def\RHPi{\ensuremath{\mathsf{\#RH}\Pi_1}}
\def\numP{\mbox{\#P}}
\newcommand{\Ind}[2]{\operatorname{\#Ind}(#1\rightarrow #2)}
\newcommand{\Hom}[2]{\operatorname{\#Hom}(#1\rightarrow #2)}
\newcommand{\SHom}[2]{\operatorname{\#Surj}(#1\rightarrow #2)}
\newtheorem{thm}{Theorem}
\newtheorem{lem}[thm]{Lemma}
\newtheorem{defn}[thm]{Definition}
\newtheorem*{prob*}{Problem}
\newtheorem*{remark*}{Remark}
\def\myprob#1#2#3{\noindent
  \textbf{Name:} #1.\\
  \textbf{Input:} #2.\\
  \textbf{Output:}  #3.\\}
\def\myfprob#1#2#3#4{\noindent
  \textbf{Name:} #1.\\
  \textbf{Input:} #2.\\
  \textbf{Output:}  #3.\\
  \textbf{Parameter: } #4.\\}
\begin{document}
\title{A Fixed-Parameter Perspective on \#BIS%
  \normalfont\footnote{Part of this work was done while the authors were
visiting the Simons Institute for the Theory of Computing. A preliminary version~\cite{CDFGL-old} of this paper appeared in IPEC 2017.}%
 }
\author{Radu Curticapean\thanks{%
Basic Algorithms Research Copenhagen and IT University of Copenhagen, Denmark. 
Supported by ERC grants PARAMTIGHT (No.~280152) and SYSTEMATICGRAPH (No.~725978) and VILLUM Foundation grant 16582 while working on this paper.},
Holger Dell\thanks{IT University of Copenhagen, Denmark.},
Fedor Fomin\thanks{University of Bergen, Norway.},\\
Leslie Ann Goldberg\thanks{University of Oxford, UK.
The research leading to these results has received funding from 
the European Research Council under the European Union's Seventh Framework Programme (FP7/2007-2013) ERC grant agreement no.\ 334828. The paper 
reflects only the authors' views and not the views of the ERC or the European Commission. The European Union is not liable for any use that may be made of the information contained therein.}, and
John Lapinskas\footnotemark[5].}
\date{}
\maketitle

\begin{abstract} 
The problem of (approximately) counting the independent sets of a bipartite graph
(\#BIS) is the canonical approximate counting problem 
that is complete in the intermediate complexity class
\RHPi. It is believed that \#BIS does not have an efficient approximation algorithm but
also that it is not NP-hard. We study the robustness of the intermediate
complexity of \#BIS\ by considering variants of the problem parameterised by the
size of the independent set. We map the complexity landscape
for three problems, with respect to exact computation and approximation and
with respect to conventional and parameterised complexity.
The three problems are counting independent sets of a given size, counting independent
sets with a given number of vertices in one vertex class and counting maximum
independent sets amongst those with a given number of vertices in one vertex class.
Among other things, we show that all of these problems are NP-hard to approximate within any polynomial ratio. 
(This is surprising because the corresponding problems without the size parameter
are complete in \RHPi, and hence are not believed to be NP-hard.)
We also show that the first problem is \#W[1]-hard to solve exactly but admits an FPTRAS, whereas the other two are W[1]-hard to approximate even within any polynomial ratio. Finally, we show that, when restricted to graphs of bounded degree, all three problems  have  efficient exact fixed-parameter algorithms.
\end{abstract}

\section{Introduction}

The problem of (approximately) counting the independent sets 
of a bipartite graph, called \#BIS, is one of the  most important problems in the
field of approximate counting.
This problem is known to be complete in the intermediate  complexity class \RHPi~\cite{DGGJ}.
Many approximate counting problems are equivalent in difficulty to \#BIS,
including those that arise in spin-system problems~\cite{BISPotts,pnaspaper} and
in other domains.
These problems
are not believed to have efficient approximation algorithms, but they are 
also not believed
to be NP-hard.

In this paper we study the robustness of
the intermediate complexity of \#BIS by considering 
relevant fixed parameters. 
It is already known that the complexity of \#BIS is unchanged
when the \emph{degree} of the input graph is restricted 
(even if it is restricted to be  at most~$6$) \cite{degreebis}
but there is an efficient approximation algorithm 
when 
a stronger degree restriction (degree at most~$5$) is 
applied, even to the vertices 
in just one of the parts of the vertex partition of  the bipartite graph \cite{pinyan}.

We consider variants of the problem parameterised by the \emph{size} of the independent set. 
We first show that all of the following problems are \#P-hard to solve exactly and NP-hard
to approximate \emph{within any polynomial factor}.
\begin{itemize}
  \item $\kbis$:
  Given a bipartite graph~$G$ and a non-negative integer $k$, count
  the 
  size-$k$ independent sets of~$G$.
    \item $\klbis$: Given a bipartite graph~$G$ with vertex partition $(U,V)$ and a non-negative integer $k$, count
     the independent sets of~$G$  
  that have $k$ vertices in~$U$, and   
  \item $\klmaxbis$: Given a bipartite graph $G$ with vertex partition $(U,V)$ and a non-negative integer~$k$,
  count the maximum independent sets amongst all independent sets
  of~$G$ with $k$ vertices in~$U$.   \end{itemize}
The NP-hardness of these approximate counting problems is surprising 
given that  
the corresponding problems without the parameter $k$
(that is, the problem \bis\ and also the problem \maxbis, 
which is the problem of  counting the \emph{maximum} independent sets of
a bipartite graph)  are
both complete in \RHPi, and hence are not believed to be NP-hard.
Therefore, it is the introduction of the parameter~$k$ that causes the hardness.

To gain  a more refined perspective on these problems, we also study them from the perspective of  parameterised complexity, taking 
the number of vertices, $n$, as the size of the input and~$k$ as the fixed parameter. 
Our results are summarised in Table~\ref{tab:results}, and stated in detail  later in the paper.
Rows~1 and~3 of the table correspond to the conventional (exact and approximate) 
setting that we have already discussed. Rows~2 and~4 correspond to the 
parameterised complexity setting, which we discuss next.
As is apparent from the table, we have mapped the complexity
landscape for the three problems in all four settings.

\begin{table}[h]
  \centering
  \renewcommand{\arraystretch}{1.8}
  \begin{tabular}{| p{1.2cm} | p{4.1cm} | p{4.1cm} | p{4.1cm} |}
    \hline 
    & \kbis & \klbis & \klmaxbis \\
    \hline 
    Exact poly & \#P-complete even in graphs of max-degree~$3$. (Thm~\ref{cor:kbis-exact})
               & \#P-complete even in graphs of max-degree~$3$. (Thm~\ref{cor:kbis-exact})
               & \#P-hard even in graphs of max-degree~$3$. (Thm~\ref{thmexklmaxbis})
    \\
    \hline
    Exact FPT & \#\Wo-complete. (Thm~\ref{thm:k-fpt-exact})
              & \Wt-hard. (Thm~\ref{thm:kl-fpt-exact})
              & \Wo-hard. (Thm~\ref{thm:kl-max-approx})
    \\ 
       & FPT for bounded-degree graphs. (Thm~\ref{thm:kbisbd}(i))
       & FPT for bounded-degree graphs. (Thm~\ref{thm:kbisbd}(ii))
       & FPT for bounded-degree graphs. (Thm~\ref{thm:kbisbd}(iii))
    \\
    \hline
    Approx poly & NP-hard to approximate within any polynomial factor. (Thm~\ref{thm:k-approx})
                & NP-hard to approximate within any polynomial factor. (Thm~\ref{thm:kl-approx})
                & NP-hard to approximate within any polynomial factor. (Thm~\ref{thm:kl-max-approx})
    \\
    \hline
    Approx FPT & Has FPTRAS. (Thm~\ref{thm:kbis-fptras})
               & \Wo-hard to approximate within any polynomial factor. (Thm~\ref{thm:kl-approx})
               & \Wo-hard to approximate within any polynomial factor. (Thm~\ref{thm:kl-max-approx})
    \\\hline
  \end{tabular}
  \caption{\label{tab:results}%
    A summary of our results.
    Each column corresponds to one of the three problems that we consider ($\kbis$, $\klbis$ and $\klmaxbis$), and each row corresponds to one of the four settings we consider (exact polynomial-time, exact FPT-time, approximate polynomial-time, and
    approximate FPT-time).
  }
\end{table}

In parameterised complexity,  the central goal is to determine whether  computational problems have fixed-parameter tractable (FPT) algorithms, that is, algorithms that run in time $f(k)\cdot n^{O(1)}$ for some computable function~$f$. 
Hardness results are presented using the 
$W$-hierarchy~\cite{FG}, and in particular using the 
complexity classes
W[1] and W[2], which constitute the first two levels of the hierarchy.
It is known  (see \cite{FG}) that
  $\mbox{FPT} \subseteq \mbox{W}[1] \subseteq \mbox{W}[2]$
  and
these classes are widely believed to be distinct from each other.
It is also   known~\cite[Chapter~14]{FK} that the Exponential Time Hypothesis (see~\cite{IP}) implies 
$\mbox{FPT} \ne \mbox{W[1]}$.
Analogous classes $\mbox{\#W[1]}$ and $\mbox{\#W[2]}$ exist for counting problems~\cite{FG04}.

As can be seen from the table, we prove that all of our problems are at least \Wo-hard to solve exactly, which indicates 
(subject to the complexity assumptions in the previous paragraph)
that they cannot be solved by FPT algorithms. Moreover, \klbis\ and \klmaxbis\ are \Wo-hard to solve even approximately. 
It is known \cite{Mueller}
that each parameterised counting problem in the class 
\#W[i] has a randomised FPT approximation algorithm using a W[i] oracle,
so  W[i]-hardness is the appropriate hardness  notion for parameterised approximate counting problems.
By contrast, we show that
  \kbis\ \emph{can} be solved approximately in FPT time. In fact, it has
  an FPT randomized
approximation scheme~(FPTRAS).

Motivated by the fact that \bis\ is known to be \#P-complete to solve exactly even on graphs of degree 3~\cite{XIA2007111}, we also consider the case where the input graph has bounded degree. 
While the conventional problems remain intractable in this setting (row one of the table),
we prove that all three of our problems admit linear-time fixed-parameter algorithms for bounded-degree instances
(row two). Note that Theorem~\ref{thm:kbisbd}(i) is also implicit in independent work by Patel and Regts~\cite{DBLP:journals/corr/PatelR16}.
 
\section{Preliminaries}

For a positive integer~$n$, we let $[n]$ denote the set $\{1,\dots,n\}$.
We consider graphs~$G$ to be undirected.
For a vertex set $X\subseteq V(G)$, denote by $G[X]$ the subgraph induced by~$X$.
For a vertex $v\in V(G)$, we write $\Gamma(v)$ for its open neighbourhood (that is, excluding $v$ itself).

Given a graph~$G$, we denote the size of a maximum independent set of~$G$ by
$\maxind(G)$.
We denote the number of all independent sets of~$G$ by $\IS(G)$, the number of size-$k$
independent sets of~$G$ by $\IS_k(G)$, and the number of size-$\maxind(G)$ independent
sets of~$G$ by $\MAXIS(G)$.
A bipartite graph~$G$ is presented as a triple $(U,V,E)$  in which  $(U,V)$ is a partition
of the vertices of $G$ into \emph{vertex classes}, and  
$E$ is a subset of $U\times V$.
 If $G=(U,V,E)$ is a bipartite graph then 
an independent set~$S$ of $G$ is said to be an ``$\ell$-left independent set of~$G$''
if $|S \cap U| = \ell$.
The size of a \emph{maximum} $\ell$-left independent set of~$G$ is denoted by~$\maxleft{\ell}(G)$.
An $\ell$-left independent set of 
$G$ is said to be ``$\ell$-left-maximum'' if and only if it has size $\maxleft{\ell}(G)$.
Finally,
$\LIS{\ell}(G)$ denotes
the number of $\ell$-left independent sets  of $G$  
and $\MAXLIS{\ell}(G)$ denotes the number of $\ell$-left-maximum independent sets of $G$.
Using these definitions, we now give formal definitions of \bis\ and of the three problems that we study.

 \bigskip
 
  \myprob{\bis}{A bipartite graph $G$}{$\IS(G)$}

\myfprob{\kbis}{A bipartite graph $G$ and a non-negative integer $k$} {$\IS_k(G)$}{$k$}

\myfprob{\klbis}{A bipartite graph $G$ and a non-negative integer $\ell$}{$\LIS{\ell}(G)$}{$\ell$}

\myfprob{\klmaxbis}{A bipartite graph $G $ and a non-negative integer $\ell$}{$\MAXLIS{\ell}(G)$}{$\ell$}

For each of our computational problems, we add ``[$\Delta$]'' to the end of the name of the problem
to indicate that the input graph~$G$
 has degree at most~$\Delta$.
For example, $\bis[\Delta]$
is the problem defined as follows.\medskip

    \myprob
      {$\bis[\Delta]$}
      {A bipartite graph~$G$ with degree at most~$\Delta$}
      {$\IS(G)$}

When stating quantitative bounds on running times of algorithms, we assume the standard word-RAM machine model with logarithmic-sized words.

\section{Exact computation: Hardness results}
\label{sec:exacthard}

In this section, we prove the hardness results presented in the first two rows
of Table~\ref{tab:results}.

\subsection{Polynomial-time computation}

We prove that all three problems are $\numP$-hard,
even when the input graphs are restricted to have degree at most~$3$.

\begin{thm}\label{cor:kbis-exact}
$\kbis[3]$ and $\klbis[3]$ are  both \textnormal{\#P}-complete.
\end{thm}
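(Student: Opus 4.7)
My plan is to give a direct polynomial-time Turing reduction from $\bis[3]$ to each of $\kbis[3]$ and $\klbis[3]$. Since Xia, Zhang and Zhao~\cite{XIA2007111} have shown that $\bis[3]$ is $\numP$-complete, this suffices for $\numP$-hardness. Membership of both problems in $\numP$ is immediate: a size-$k$ (respectively $\ell$-left) independent set is a polynomial-size witness whose correctness can be verified in polynomial time by checking the cardinality constraint and the edge relation.

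For the reduction to $\kbis[3]$, the key observation is that on an $n$-vertex graph~$G$ every independent set has some size $k\in\{0,1,\ldots,n\}$, so
$$\IS(G)=\sum_{k=0}^{n}\IS_k(G).$$
Given a bipartite input graph~$G$ of maximum degree at most~$3$, we therefore simply query the $\kbis[3]$ oracle once for each value $k\in\{0,1,\ldots,n\}$ on the same graph~$G$ (which continues to be bipartite of maximum degree~$3$), and output the sum. This is $n+1$ oracle calls and polynomial additional work, so it is a polynomial-time Turing reduction.

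For the reduction to $\klbis[3]$, the same idea applies to the vertex partition. Writing $G=(U,V,E)$, each independent set of $G$ meets $U$ in a subset of some size $\ell\in\{0,1,\ldots,|U|\}$, and these classes are disjoint, so
$$\IS(G)=\sum_{\ell=0}^{|U|}\LIS{\ell}(G).$$
Given the $\klbis[3]$ oracle we query it once for each $\ell$ on~$G$ and sum, again a polynomial-time Turing reduction. Since both reductions are elementary summations over the parameter and preserve bipartiteness and the degree bound, there is no real obstacle; the $\numP$-hardness of each parameterised variant follows immediately from that of $\bis[3]$, and combined with the easy upper bounds this gives $\numP$-completeness of $\kbis[3]$ and $\klbis[3]$.
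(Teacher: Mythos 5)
Your proposal is correct and follows essentially the same route as the paper: both establish membership in $\numP$ from the definitions, invoke the $\numP$-hardness of $\bis[3]$ due to Xia, Zhang and Zhao, and reduce via the identities $\IS(G)=\sum_k \IS_k(G)$ and $\IS(G)=\sum_\ell \LIS{\ell}(G)$ with one oracle call per value of the parameter. No gaps.
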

\begin{proof}
The problems $\kbis[3]$ and $\klbis[3]$ are in \#P, which can be deduced
from their definitions.
We show that the problems are \#P-hard.
Xia, Zhang and Zhao~\cite[Theorem 9]{XIA2007111} show that $\bis[3]$ is
\#P-hard, even 
under the additional restriction that the input graph is planar and $3$-regular.

There is a straightforward reduction from $\bis[3]$ to $\kbis[3]$.
Suppose that $G$ is an $n$-vertex input to $\bis[3]$. 
Then $\IS(G) = \sum_{k=0}^n \IS_k(G)$. 
Using an oracle for  $\kbis[3]$ 
(with the graph $G$ and each $k\in \{0,\ldots,n\}$) one can therefore
  compute $\IS(G)$, as desired.

Similarly, there is a straightforward reduction from $\bis[3]$ to $\klbis[3]$, using the
fact that 
$\IS(G) = \sum_{\ell=0}^n \LIS{\ell}(G)$. 
Thus,  the problems $\kbis[3]$ and $\klbis[3]$ are both  \#P-hard.
\end{proof}

\begin{thm}\label{thmexklmaxbis}
$\klmaxbis[3]$ is \textnormal{\#P}-hard.
\end{thm}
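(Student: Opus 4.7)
I plan to reduce from $\klbis[3]$, shown \#P-hard in Theorem~\ref{cor:kbis-exact}. Given an instance $(G,\ell)$ with $G=(U,V,E)$ of max degree~$3$, the goal is to compute $\LIS{\ell}(G)=\sum_{r} a_{\ell,r}$ using polynomially many $\klmaxbis[3]$ oracle calls, where $a_{\ell,r}$ denotes the number of independent sets of $G$ with $\ell$ vertices in $U$ and $r$ in $V$.

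The approach is polynomial interpolation via disjoint-gadget constructions. For each $t\in\{0,1,\ldots,|V|\}$ I form $G_t=G\sqcup H_t$, where $H_t$ is the vertex-disjoint union of $t$ edges placed across the bipartition; this preserves max degree~$3$. Writing $\maxind=\maxind(G)$ for the maximum IS size of $G$ and $\Lambda\subseteq\{0,\ldots,|U|\}$ for the set of left-sizes admitting a maximum IS of~$G$, every maximum $\ell'$-left IS of $G_t$ consists of a maximum IS of $G$ with some $\ell_1\in\Lambda$ left vertices paired with a maximum IS of $H_t$ contributing the remaining $\ell'-\ell_1$ left vertices (there are exactly $\binom{t}{\ell'-\ell_1}$ such IS's in $H_t$). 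Hence
\[
\MAXLIS{\ell'}(G_t) \;=\; \sum_{\ell_1\in\Lambda\cap[0,\ell']} a_{\ell_1,\,\maxind-\ell_1}\,\binom{t}{\ell'-\ell_1},
\]
a polynomial of degree at most $\ell'$ in~$t$. Evaluating at $\ell'+1$ distinct values of $t$ and inverting a Vandermonde-type system recovers each $a_{\ell_1,\maxind-\ell_1}$; taking $\ell'=|U|$ I extract them for every $\ell_1\in\Lambda$ and sum to obtain $\maxbis(G)$, the number of maximum independent sets of~$G$.

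Completing the reduction then reduces to establishing that $\maxbis$ on bipartite graphs of max degree~$3$ is itself \#P-hard. I plan to prove this as a separate lemma by exhibiting a parsimonious encoding of a known \#P-hard combinatorial problem (for instance, counting minimum vertex covers, which coincides with $\maxbis$ by K\"onig's theorem and whose \#P-hardness traces back to antichain counting in bounded-height posets~\cite{PB}) into bipartite max-degree-$3$ form. The main obstacle is producing such an encoding while respecting the degree-$3$ constraint on both sides of the bipartition; should this prove elusive, a fallback is to enrich the gadget family with small disjoint stars $K_{1,s}$ and interpolate over the additional parameter~$s$, whose max-IS contributions in $H_t$ trade left for right vertices at a different rate than a plain edge. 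Sweeping both $t$ and $s$ then grants access to $a_{\ell,r}$ for every $r$, not just to the maximum-IS strata $r=\maxind-\ell_1$, and lets us recover $\LIS{\ell}(G)$ directly from $\klbis[3]$-hardness via a generalised Vandermonde argument~\cite{Buck}.
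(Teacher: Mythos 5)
Your core reduction --- pad $G$ with a disjoint matching, invert the resulting binomial system, recover the stratification of maximum independent sets by left-size, and sum --- is exactly the construction used in the paper, including the use of a Gessel--Viennot / generalised Vandermonde argument for invertibility. Despite your opening line, you are not reducing from $\klbis[3]$: as you yourself observe, the interpolation only recovers the quantities $a_{\ell_1,\maxind-\ell_1}$, whose sum is $\maxbis(G)$, so what your construction actually establishes is a reduction from $\maxbis[3]$ to $\klmaxbis[3]$. That is precisely the paper's reduction.

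The genuine gap is the base of the reduction. The paper simply cites Vadhan~\cite[Corollary 4.2(1)]{Vadhan} for the \#P-hardness of $\maxbis[3]$, which closes the argument cleanly. You were unaware of that result and instead propose to prove the \#P-hardness of $\maxbis[3]$ yourself via a parsimonious degree-$3$ encoding of minimum-vertex-cover counting, while conceding the degree constraint may be an obstacle. That concession is well placed: a direct parsimonious degree-preserving reduction is non-trivial, and without it the theorem is not proved. Your fallback --- sweeping over disjoint stars $K_{1,s}$ and doing a two-parameter interpolation to recover all strata $a_{\ell,r}$ --- does not work as sketched. For $s\ge 2$, the unique maximum independent set of $K_{1,s}$ is the set of all $s$ leaves, which contributes zero left vertices; so a disjoint union of such stars does not let a maximum independent set trade left for right vertices at a controllable rate, and the associated system is degenerate rather than a generalised Vandermonde system. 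The star-center degree $s$ is also bounded by $3$, so you have only $s\in\{1,2,3\}$ to sweep, which is far too coarse. Net assessment: same gadget and same interpolation as the paper, but the reduction is left hanging at the step where the paper invokes Vadhan, and your proposed replacements for that step do not succeed.

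A minor technical point: the padding size matters. With $G_t = G \sqcup (\text{$t$ disjoint edges})$ and $\ell'=|U|$, the identity $\MAXLIS{\ell'}(G_t)=\sum_{\ell_1} a_{\ell_1,\maxind-\ell_1}\binom{t}{\ell'-\ell_1}$ requires $\maxleft{\ell'}(G_t)=\maxind(G)+t$, which holds only once $t$ is large enough (e.g.\ $t\ge|U|$), not for $t\in\{0,1,\dots,|V|\}$ as written. The paper handles this by starting the matching size at $s=|U|$; you should do the same.
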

\begin{proof}
Vadhan has shown \cite[Corollary 4.2(1)]{Vadhan} that
$\maxbis[3]$ is \#P-complete. We now 
reduce $\maxbis[3]$ to 
$\klmaxbis[3]$.   Let $G=(U,V,E)$ be an instance of \maxbis[3] and let $s=|U|$.  For 
each $j\in \{0,\ldots,s\}$,
 let $x_j$ be the number of 
 size-$\maxind(G)$
 $(s-j)$-left independent sets of~$G$.
 We wish to compute $\MAXIS(G) = \sum_{j=0}^{s} x_j$, so it suffices to show how to compute 
 the vector $(x_0,\ldots,x_s)$   
 in polynomial time
 using an oracle for $\klmaxbis[3]$  --- this is what we do in the remainder of the proof.

For every  non-negative integer~$i$,  let $G_i = (U_i, V_i, E_i)$ 
be the graph formed from~$G$ by adding a disjoint matching of size $s+i$. 
Note that $\maxind(G_i) = \maxind(G)+s+i$.
Also, $G_i$ has an $s$-left independent set
of size $\maxind(G_i)$ (to see this, consider any size-$\mu(G)$ independent set of~$G$, say one that is
$a$-left for some $a \in \{0,\ldots,s\}$, and augment this with $s-a$ matching vertices from~$U_i$ and
$ i+a$ matching vertices from~$V_i$).
Let
$w_i$ be the number of 
size-$(\maxind(G_i))$ 
$s$-left independent sets of~$G_{i}$
and note that  $\MAXLIS{s}(G_i) = w_i$.  
Since 
  $G_i$ has degree at most~$3$,
$w_{i}$ can be computed using an oracle for $\klmaxbis[3]$
(using the input graph $G_i$ and setting the input~$\ell$ equal to~$s$).

From the definitions of~$x_j$ and $w_i$, we have
\begin{equation}\label{eq:simple}
w_i = 
\sum_{j=0}^{s} x_{j} \binom{s+i}{j}.
\end{equation}

Now let $M$ be the matrix whose rows and columns are indexed by
$\{0,\ldots,s\}$ and 
whose entry $M_{i,j}$ is $\binom{s+i}{j}$.
Let $\boldsymbol{w}$ be the transpose of the row vector $(w_0,\ldots,w_s)$
and $\boldsymbol{x}$ be the transpose of the row vector $(x_0,\ldots,x_s)$.  Then Equation~\eqref{eq:simple} can be re-written
as $\boldsymbol{w} = M \boldsymbol{x}$. 

Now \cite[Corollary 2]{Gessel}
shows that $M$ is invertible 
(taking $k=s+1$, 
$a_i = s+i-1$ and
$b_{j}=j-1$ for $1\leq i,j \leq k$, in the language of the corollary), so 
the vector $\boldsymbol{x}$
can be computed as $ \boldsymbol{x} = M^{-1} \boldsymbol{w} $. 
Since it suffices to compute~$\boldsymbol{x}$, 
and the vector $\boldsymbol{w}$ can be computed using the oracle,
this completes the reduction. 
\end{proof}

\subsection{Fixed-parameter intractability}\label{sec:fpt-exact-hard}

We first define the parameterised complexity classes relevant in this paper, 
namely, the class \Wo\ of decision problems, and the counting classes \#\Wo\ and \Wt. 
For simplicity, we do so in terms of complete problems and reductions. 
The following definitions are taken from Flum and Grohe~\cite{FG}.

\begin{defn}
Let $F$ and $G$ be parameterised problems. For any instance $x$ of $F$, write $k(x)$ for the parameter of $F$ and $|x|$ for the size of $x$. For any instance $y$ of $G$, write $\ell(y)$ for the parameter of $y$. An \emph{FPT Turing reduction} from $F$ to $G$ is an algorithm with an oracle for $G$ that, for some computable functions $f,g:\N\rightarrow\N$ and for some constant $c \in \N$, solves any instance $x$ of $F$ in time at most $f(k(x))\cdot |x|^c$ in such a way that for all oracle queries the instances $y$ of $G$ satisfy $\ell(y) \le g(k(x))$. 

Now, write $\mathcal{F}$ for the set of all instances of $F$, and for all $x \in \mathcal{F}$ write $F(x)$ for the desired output given input $x$. Likewise, write $\mathcal{G}$ for the set of all instances of $G$, and for all $y \in \mathcal{G}$ write $G(y)$ for the desired output given input $y$. Suppose $R:\mathcal{F}\rightarrow\mathcal{G}$ is a function satisfying the following properties: for all $x \in \mathcal{F}$, $F(x) = G(R(x))$; there exists a computable function $f:\N\rightarrow\N$ and a constant $c \in \N$ such that for all $x \in \mathcal{F}$, $R(x)$ is computable in time at most $f(k(x))\cdot |x|^c$; there exists a computable function $g:\N\rightarrow\N$ such that for all $x \in \mathcal{F}$, $\ell(R(x)) \le g(k(x))$. 
If $F$ and $G$ are decision problems, we call $R$ an \emph{FPT many-one reduction} from $F$ to $G$; if $F$ and $G$ are counting problems, we call $R$ an \emph{FPT parsimonious reduction} from $F$ to $G$.
\end{defn}

We define \Wo\ in terms of the following problem.\medskip

\myfprob{\clique}{A graph $G$ and a positive integer $k$}{True if $G$ contains a $k$-clique, false otherwise}{$k$}

\noindent\Wo\ is the set of all parameterised decision problems with an FPT many-one reduction to \clique. We say a problem $F$ is \Wo-hard if there is an FPT Turing reduction from \clique\ to $F$. For a proof that this is equivalent to the standard definition of \Wo, see e.g.,\ Downey and Fellows~\cite[Theorem 21.3.4]{Downey}. 

We define \#\Wo\ in terms of the following problem.\medskip

\myfprob{\ckclique}{A graph $G$ and a positive integer $k$}{The number of $k$-cliques in $G$}{$k$}

\noindent\#\Wo\ is the set of all parameterised counting problems with an FPT parsimonious reduction to \ckclique. We say a problem $F$ is \#\Wo\emph{-hard} if there is an FPT Turing reduction from \ckclique\ to $F$. For a proof that this is equivalent to the standard definition of \#\Wo, see e.g.,\ \cite[Theorem 14.18]{FG}. 

Recall that a set $D\subseteq V(G)$ is called a \emph{dominating set} of a graph~$G$ if every vertex~$v\in V(G)$ is either contained in~$D$ or adjacent to a vertex of~$D$. We define \Wt\ in terms of the following problem.\medskip

\myfprob{\domset}{A graph $G = (U,E)$ and a positive integer $k$}{The number of dominating sets of $G$ of size $k$}{$k$}

\noindent\Wt\ is the set of all parameterised counting problems with an FPT parsimonious reduction to \domset. We say a problem $F$ is \Wt-hard if there is an FPT Turing reduction from \domset\ to $F$. For a proof that this is equivalent to the standard definition of \Wt, see~\cite[Theorem 19]{FG04}).

In order to prove our exact fixed-parameter hardness results, we 
consider the following problem.

\medskip

\myfprob{\rainbow}
{An integer $t$, a $2t$-coloured graph $\cg{G}$, and a $2t$-coloured balanced biclique $\cg{H}$ on $2t$ vertices (i.e.\ a $2t$-coloured copy of $K_{t,t}$) in which every colour appears exactly once}
{The number $\subg{\cg{G}}{\cg{H}}$ of subgraphs $\cg{K} \subseteq \cg{G}$ with $\cg{K} \simeq \cg{H}$}
{$t$}

\begin{thm}\label{thm:k-fpt-exact}
	\kbis\ is \textnormal{\#W[1]}-complete.
\end{thm}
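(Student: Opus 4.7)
For membership, I would give an FPT parsimonious reduction from $\kbis$ to $\ckclique$. Given an instance $(G,k)$ of $\kbis$, form the complement graph $\bar G$; every size-$k$ independent set of $G$ corresponds to a unique size-$k$ clique of $\bar G$, and vice versa, so $\IS_k(G)$ equals the number of $k$-cliques of $\bar G$. This reduction is polynomial-time and does not change the parameter, so $\kbis \in \textnormal{\#W[1]}$.

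For hardness, my plan is to route the reduction through $\rainbow$, in two parts. First, I would establish that $\rainbow$ is itself \textnormal{\#W[1]}-hard; this is the part I expect to be the main obstacle, and it is most naturally handled as a preceding lemma. A natural route is an FPT parsimonious reduction from colourful (partitioned) $k$-clique to $\rainbow$, in the spirit of Lin's W[1]-hardness reduction for biclique~\cite{Lin}.

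Second, assuming $\rainbow$'s hardness, I would reduce $\rainbow$ to $\kbis$ using inclusion-exclusion over colour classes. Let $(t, \cg{G}, \cg{H})$ be a $\rainbow$ instance with colour classes $C_1,\dots,C_{2t}$, where colours $1,\dots,t$ lie on one side of $\cg{H}$'s bipartition and $t+1,\dots,2t$ on the other. Without loss of generality, any edges of $\cg{G}$ joining vertices whose colours lie on the same side of $\cg{H}$'s bipartition may be deleted, since such edges cannot appear in any copy of $\cg{H}$; now $\cg{G}$ is bipartite with partition $(U,V) = (\bigcup_{i\le t}C_i,\bigcup_{i>t}C_i)$. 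Let $\cg{G}'$ denote the bipartite complement of $\cg{G}$ with respect to $(U,V)$; then copies of $\cg{H}$ in $\cg{G}$ correspond bijectively to size-$2t$ independent sets of $\cg{G}'$ that use one vertex of each colour. Since $|S|=2t$ equals the number of colours, an independent set hitting every colour class must hit each exactly once; hence by M\"obius inversion over the Boolean lattice of colour subsets,
\[
\subg{\cg{G}}{\cg{H}} \;=\; \sum_{T\subseteq[2t]} (-1)^{2t-|T|}\,\IS_{2t}(G_T),
\]
where $G_T$ denotes the (bipartite) subgraph of $\cg{G}'$ induced by $\bigcup_{i\in T}C_i$. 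Each $\IS_{2t}(G_T)$ is obtained with one $\kbis$ oracle call, giving $2^{2t}$ calls in total, each with parameter exactly $2t$; this is an FPT Turing reduction, completing the proof.
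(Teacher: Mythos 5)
Your membership argument and your reduction from $\rainbow$ to $\kbis$ both match the paper's proof. The M\"obius-inversion identity you write,
\[
\subg{\cg{G}}{\cg{H}} = \sum_{T\subseteq[2t]} (-1)^{2t-|T|}\,\IS_{2t}(G_T),
\]
is exactly the paper's inclusion-exclusion (with $T=[2t]\setminus S$ in the paper's notation), and your observation that same-side edges may be deleted and that the bipartite complement turns colourful bicliques into colourful independent sets is the same construction of $\cg G'$ and $G_S$. So that half of the hardness argument is correct and essentially identical.

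The gap is in the first half: you never actually establish that $\rainbow$ is \#\Wo-hard, and the route you sketch is both unexecuted and more delicate than you suggest. You propose a parsimonious reduction from colourful $k$-clique to $\rainbow$ ``in the spirit of Lin,'' but Lin's construction is for the \emph{uncoloured decision} problem $k$-Biclique, and it does not transfer automatically to a parsimonious counting reduction, nor to the colourful/partitioned setting of $\rainbow$. (The naive ``duplicate each colour class and join copies by adjacency'' attempt does not give a bijection: a colourful $K_{k,k}$ in the product graph need not have its two sides equal, so you overcount.) The paper sidesteps all of this by invoking a general theorem of Curticapean and Marx~\cite[Theorem~II.8]{CM}: counting \emph{colourful} subgraph copies of a pattern drawn from any recursively-enumerable class of unbounded treewidth is \#\Wo-hard, and balanced bicliques $K_{t,t}$ have treewidth $t$, so $\rainbow$ is \#\Wo-hard immediately. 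That citation is doing real work here, and your plan does not replace it. To make your proposal complete you would either need to cite the Curticapean--Marx result (as the paper does) or supply a full proof of the preceding lemma, which you have deferred.
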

\begin{proof}
	We will prove first easiness, then hardness.
	
	\medskip\noindent\textbf{\#Size-BIS is in \#W[1]:} We give an FPT parsimonious reduction to \ckclique. Indeed, given an instance $(G,k)$ of \kbis\ with $G = (U,V,E)$, let $V' = U \cup V$, $E' = \{\{u,v\}\mid u,v \in V', (u,v) \notin E\}$, and $G' = (V',E')$. Then the size-$k$ independent sets of $G$ correspond exactly to the size-$k$ cliques of $G'$, as required.
	
	\medskip\noindent\textbf{\#Size-BIS is \#W[1]-hard:} We give an FPT Turing reduction from \rainbow. Note that the class $\{K_{t,t} \colon t \ge 1\}$ of all balanced bicliques is recursively enumerable and contains graphs of arbitrarily high treewidths, so \rainbow\ is \#W[1]-hard by a result of Curticapean and Marx~\cite[Theorem II.8]{CM}.

	Let $(t,\cg{G},\cg{H})$ be an instance of \rainbow. Write $\cg{G} = ((V,E),c)$. Without loss of generality, suppose the colours $\{1, \dots, t\}$ appear in one vertex class of $\cg{H}$ and the colours $\{t+1, \dots, 2t\}$ appear in the other. Let 
	\[E' = \{\{u,v\} \mid u,v \in V,\, c(u) \in [t],\, c(v) \in [2t] \setminus [t],\, \{u,v\} \notin E\}.\]
	Define a coloured graph $\cg{G}' = ((V, E'),c)$. Then each copy of $\cg{H}$ in $\cg{G}$ spans an independent set in $\cg{G}'$ in which every colour appears exactly once and vice versa, so $\subg{\cg{G}}{\cg{H}}$ is precisely the number of such independent sets in $\cg{G}'$.
	
	For any set $S \subseteq [2t]$, let $\mathcal{I}_S$ be the set of size-$2t$ independent sets in $\cg{G}'$ which contain no vertices with colours in $S$. By the inclusion-exclusion principle,
	\begin{equation}\label{eqn:rainbow}
		\subg{\cg{G}}{\cg{H}} = \left|\mathcal{I}_{\emptyset} \setminus \bigcup_{i=1}^{2t} \mathcal{I}_{\{i\}}\right| = |\mathcal{I}_{\emptyset}| - \sum_{\emptyset \ne S \subseteq [2t]} (-1)^{|S|-1}|\mathcal{I}_S|.
	\end{equation}
	Moreover, for any set $S \subseteq [2t]$, let $G_S$ be the bipartite graph $(U_S,V_S,E_S)$ defined by
	\begin{align*}
		U_S &= \{v\in V \mid c(v) \in [t] \setminus S\},\\
		V_S &= \{v\in V \mid c(v) \in [2t] \setminus ([t] \cup S)\},\\
		E_S &= \{(u,v) \mid u \in U_S, v \in V_S, \{u,v\} \in E'\}.
	\end{align*}
	Then $\mathcal{I}_S$ is precisely the set of size-$2t$ independent sets in $G_S$. Our algorithm therefore determines each $|\mathcal{I}_S|$ by calling a \kbis\ oracle with input $(G_S,2t)$, then uses \eqref{eqn:rainbow} to compute $\subg{\cg{G}}{\cg{H}}$.
\end{proof}

Next, we turn to the exact parameterised complexity of $\klbis$.
The hardness result we obtain for this problem is a bit stronger than for
$\kbis$: we prove that it is $\Wt$-hard.

\begin{thm}\label{thm:kl-fpt-exact}
\klbis\ is $\Wt$-hard.
\end{thm}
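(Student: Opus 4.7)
The plan is to give an FPT Turing reduction from \domset\ to \klbis, which by the definitions above establishes \Wt-hardness. The intuition is that the quantity $|D(S)|$ (the number of vertices of $G$ dominated by a set $S$) can be read off from the number of bipartite independent sets extending $S$ in a suitable auxiliary graph, because the non-neighbours of $S$ on the right are exactly the undominated vertices. A standard polynomial interpolation argument will then separate contributions by $|D(S)|$, and the top coefficient will recover the number of size-$k$ dominating sets.

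Concretely, given an instance $(G,k)$ of \domset\ with $G=(U,E)$ and $n=|U|$, I would construct, for each $j\in\{0,1,\dots,n\}$, a bipartite graph $H_j=(A,B_j,F_j)$ where $A$ is a fresh copy of $U$, $B_j=U\times[j]$, and $\{u,(v,i)\}\in F_j$ iff $u=v$ or $\{u,v\}\in E$ (i.e.\ iff $u$ dominates $v$ in $G$). Writing $D(S)$ for the set of vertices dominated by $S\subseteq U$, the key identity to verify is
\[
  \LIS{k}(H_j) \;=\; \sum_{d=0}^{n} a_d\,(2^{n-d})^{j},
\]
where $a_d$ counts the size-$k$ sets $S\subseteq U$ with $|D(S)|=d$. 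This holds because, for such an $S\subseteq A$, the neighbourhood of $S$ in $H_j$ is exactly $D(S)\times[j]$, so $S$ extends to $2^{j(n-d)}$ different $k$-left independent sets of $H_j$. In particular $a_n$ is the number of size-$k$ dominating sets of $G$, which is the value we need to compute.

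Finally I would argue that the resulting $(n+1)\times(n+1)$ linear system with matrix $M_{j,d}=(2^{n-d})^{j}$ is a Vandermonde matrix in the pairwise distinct evaluation points $2^{n},2^{n-1},\dots,2^{0}$, and is therefore invertible; so one can query the \klbis\ oracle on each pair $(H_j,k)$ to obtain $\LIS{k}(H_j)$ and then recover the vector $(a_0,\dots,a_n)$ by polynomial-time linear algebra. Each $H_j$ has $n(1+j)=O(n^2)$ vertices, every oracle call carries the same parameter $\ell=k$, and the whole reduction runs in time polynomial in $n$ (with no dependence on $k$ beyond that of the oracle), so this is a bona fide FPT Turing reduction. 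I do not expect a real obstacle here: the only design choice is the $j$-fold blow-up of $B$, which is precisely what turns the single counting identity into a Vandermonde system with sufficiently many distinct nodes to invert.
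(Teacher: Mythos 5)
Your proposal is correct and is essentially the paper's own argument: the split graph $H_1$ (one copy of $U$ on the right with edges iff domination) is exactly the paper's $G'$, your $a_d$ is the paper's $Z_{k,d}$, your blown-up graphs $H_j$ are the paper's $G'_i$, and both proofs finish by inverting the same Vandermonde system in the nodes $2^{n},\dots,2^{0}$. The only cosmetic difference is that you index the clones by $j\in\{0,\dots,n\}$ while the paper uses $i\in[n+1]$.
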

\begin{proof}
  We reduce from the dominating set problem.
  Let $G = (U,E)$ and $k$ be given as input for \domset\ where $U = \{u_1,\ldots,u_n\}$.
  The reduction computes the bipartite split graph of~$G$; formally, let $V =
  \{v_1,\ldots,v_n\}$, let $E' = \{(u_a,v_b) \mid \mbox{$a=b$ or $\{u_a,u_b\}\in
E$}\}$, and let $G' = (U,V,E')$.

For non-negative integers $\ell$ and $r$, we define an \emph{$(\ell,r)$-set of $G'$} to be  
a size-$\ell$ subset~$X$ 
of $U$ 
that has exactly $r$ neighbours in~$V$.
Let $Z_{\ell,r}$ be the number of $(\ell,r)$-sets of $G'$. Note that 
a size-$k$ subset $X$ of~$U$
is a dominating set of $G$ if and only if 
it is a $(k,n)$-set of $G'$, so there are precisely $Z_{k,n}$ size-$k$ dominating sets of~$G$.
  
The algorithm applies polynomial interpolation to determine $Z_{k,r}$ for all $r \in \{0,\dots,n\}$. 
We use a special case of the cloning construction from the proof of
Theorem~\ref{thm:k-fpt-exact}.
For every positive integer~$i$,
let $V_i = V \times [i]$, let 
$E_i' = \{(u,(v,b)) \in U \times V_i \mid (u,v) \in E'\}$, 
and let $G_i' = (U,V_i,E_i')$.  
For each $(k,r)$-set $X$ of $G'$, there 
are exactly $2^{i(n-r)}$ $k$-left independent sets $S$ of~$G'_i$
with $S\cap U = X$.  
Thus for all $i \in [n+1]$,
\begin{equation}\label{eqn:kl-fpt-exact}
\LIS{k}(G'_i) = \sum_{r=0}^n2^{i(n-r)}Z_{k,r}.
\end{equation}
  
Let $M$ be the $(n+1)\times (n+1)$ matrix whose rows are indexed by $[n+1]$ and
columns are indexed by $\{0, \dots, n\}$ such that $M_{i,r} = 2^{i(n-r)}$ holds.
Then \eqref{eqn:kl-fpt-exact} can be viewed as a linear
equation system $\boldsymbol{w} = M\boldsymbol{z}$, where
$\boldsymbol{w} = (\LIS{k}(G'_1), \dots, \LIS{k}(G'_{n+1}))^T$ and
$\boldsymbol{z} = (Z_{k,0}, \dots, Z_{k,n})^T$.
The oracle for \klbis\ can be used to compute $\boldsymbol{w}$, and 
$M$ is invertible since it is a (transposed) Vandermonde matrix.
Thus the reduction can compute $\boldsymbol{z}$, and in particular $Z_{k,n}$, as
required.
\end{proof}

We defer the proof of the \Wo-hardness of \klmaxbis\ to the next section, as it
is implied by the corresponding approximation hardness result.

\section{Approximate computation: Hardness results}

In this section, we prove the hardness results in rows~3 and~4 of
Table~\ref{tab:results}.
Note that the reductions
from Section~\ref{sec:exacthard} cannot be used  here, since \bis\ is not known to be NP-hard to approximate.
In order to state our hardness results formally, we introduce approximation
versions of the problems that we consider.\medskip

\myfprob{\apklmaxbis{$c$}}{A bipartite graph $G$ on $n$ vertices and a non-negative integer $\ell$}{A number $z$ such that $n^{-c}\cdot\MAXLIS{\ell}(G) \le z \le n^c\cdot\MAXLIS{\ell}(G)$}{$\ell$}

\myfprob{\apklbis{$c$}}{A bipartite graph $G$ on $n$ vertices and a non-negative integer $\ell$}{A number $z$ such that $n^{-c}\cdot\LIS{\ell}(G) \le z \le n^c\cdot\LIS{\ell}(G)$}{$\ell$}

\myfprob{\apkbis{$c$}}{A bipartite graph $G$ on $n$ vertices and a non-negative integer $k$}{A number $z$ such that $n^{-c}\cdot\IS_k(G) \le z \le n^{c}\cdot\IS_k(G)$}{$k$}

We first prove the results in the last column of Table~\ref{tab:results} and
establish the others by reduction.
\begin{thm}\label{thm:kl-max-approx}
  For all $c \ge 0$, \apklmaxbis{$c$} is both \textnormal{NP}-hard and \textnormal{\Wo}-hard.
\end{thm}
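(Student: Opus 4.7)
My plan is to reduce from the $k$-CLIQUE problem, which is both NP-hard (with $k$ part of the input) and \Wo-hard (parameterized by $k$), to $\apklmaxbis{c}$. Given an instance $(G, k)$ with $|V(G)| = n$, I would construct in polynomial time a bipartite graph $B = (U, V, F)$ on $N = \mathrm{poly}(n,c)$ vertices together with a parameter $\ell$ depending only on $k$ and $c$, such that $\MAXLIS{\ell}(B)$ is at least $N^{2c+1}$ times larger in a YES instance than in a NO instance. Any $N^c$-approximation oracle for $\MAXLIS{\ell}$ would then return values whose guaranteed intervals are disjoint between the two cases, deciding $k$-CLIQUE and establishing both hardness claims.

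The base construction mirrors Theorem~\ref{cor:kbis-exact}. Assume by standard padding that $G$ is $d$-regular with $d \geq k$. Let $U = V(G)$, and for each edge $e \in E(G)$ place an edge-witness $w_e \in V$, with $(u_v, w_e) \in F$ iff $v \in e$. For any $X \subseteq U$ with $|X| = k$, one has $|N(X)| = kd - e_G(X)$, which is minimized exactly when $X$ spans a $k$-clique in $G$. To achieve the required count gap, I would augment $B$ with two kinds of amplifier gadgets: first, for each $v \in V(G)$, add $r$ clones of $u_v$ in $U$ with identical neighbourhoods, together with penalty vertices in $V$ that make it strictly suboptimal to include two clones of the same vertex in $X$; this forces the optimal $X$'s to be ``spread'', with exactly $k$ distinct vertices in the support. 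Second, for each edge $e \in E(G)$ attach a family of $O(\log r)$ binary switch gadgets whose combined effect is to provide $r$ interchangeable maximum configurations exactly when both endpoints of $e$ are in the support of $X$. A $k$-clique in $G$ then contributes at least $r^k \cdot r^{\binom{k}{2}}$ maximum $\ell$-left independent sets to $B$, while any $k$-subset whose support misses an edge of the clique contributes at most $r^k \cdot r^{\binom{k}{2} - 1}$. Since there are at most $\binom{n}{k}$ candidate supports, the YES-to-NO ratio is at least $r/\binom{n}{k}$, which can be made to exceed $N^{2c+1}$ by choosing $r$ sufficiently polynomial in $n$ (with the amplifier exponent per edge chosen as a function of $c$ to control $|V(B)|$-growth).

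The main technical obstacle is the design of the edge-switch gadget. Because the right side of an $\ell$-left-maximum independent set is uniquely determined by the left side (namely $V \setminus N(X)$), any multiplicative amplification of the count must come from introducing multiple equivalent choices of $X$, not of $Y$. A workable realisation is to add, per edge $e = \{v, w\} \in E(G)$ and $i \in O(\log r)$, a pair $(a_e^{(i)}, b_e^{(i)}) \in U \times V$ whose adjacency pattern makes it so that including $a_e^{(i)}$ in $X$ and excluding $b_e^{(i)}$ from $Y$ produces the same total size as the swap, precisely when both $u_v$ and $u_w$ are in $X$; combined via binary encoding, this yields an $r$-fold multiplicative amplification per clique edge using only $O(\log r)$ vertices per edge, keeping $|V(B)|$ polynomial in $n$ and $r$. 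The detailed analysis would verify that (a) the cloning penalties are strong enough to rule out concentrated configurations, (b) the edge amplifiers act independently across edges and do not interact with the base construction, and (c) the reduction runs in polynomial time with $\ell$ a computable function of $k$ and $c$, thereby establishing both NP-hardness (with $k$ part of the input) and \Wo-hardness (with parameter $\ell$ bounded by a function of $k$).
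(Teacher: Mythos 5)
Your proposal shares the paper's starting point (reduce from \clique) and the right core observation (that $\ell$-left-maximum independent sets of the bipartite incidence graph are governed by minimising the span of a small structure in the original graph), but it then diverges into a gadget design that is not carried out and has real obstacles — whereas the paper chooses a different bipartite encoding that makes all of those gadgets unnecessary.

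The key structural difference: you put the \emph{vertices} of $G$ (with clones) on the left and edge-witnesses on the right, so a left set $X$ of size $\ell=k$ minimises $|N(X)|=kd-e_G(X)$ when $X$ spans a clique. The paper instead puts \emph{edges} on the left and vertices on the right, with $\ell=\binom{k}{2}$: a set of $\binom{k}{2}$ edges spans at least $k$ vertices, with equality exactly for a $k$-clique, so $\MAXLIS{\ell}(G'')=\NoClique_k(G')$ on the nose. All the amplification then comes for free from the standard vertex-blow-up (each vertex of $G$ replaced by $t=n^{2c}$ copies, which multiplies the clique count by $t^k$), plus a single planted $K_k$ so the NO case has exactly one maximum configuration. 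No cloning penalties or edge-switch gadgets are required.

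The gaps in your route are concrete, not merely cosmetic. First, with $r$ clones of each $u_v$ having identical neighbourhoods, putting all $\ell=k$ clones on the \emph{same} original vertex yields $|N(X)|=d$, which is \emph{smaller} than the clique value $kd-\binom{k}{2}$ whenever $d>k/2$; so without the penalty gadget the concentrated solutions dominate, and you have not exhibited a penalty construction that both rules them out and leaves the rest of the analysis intact. Second, and more seriously, your edge-switch gadget relies on swapping a vertex $a_e^{(i)}\in U$ in and out of $X$ while compensating on the $V$ side, but $\ell=|X\cap U|$ is a \emph{hard constraint} of the problem: including or excluding $a_e^{(i)}$ changes $|X\cap U|$, so the ``swap'' is not size-preserving within the class of $\ell$-left independent sets. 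This is precisely why the right side being a deterministic function of the left side is so restrictive, and you correctly flag it as the main obstacle — but the proposed $O(\log r)$ binary switch does not resolve it. The paper's edges-on-the-left encoding eliminates the obstacle entirely: multiplicity arises from having $t^k$ distinct cliques in the blown-up host graph, each giving a genuinely different left set of size $\binom{k}{2}$, rather than from per-edge freedom within a single configuration.

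So this is not a different-but-correct proof: the high-level plan is reasonable, but the two gadgets on which the count gap depends are not constructed, and the edge-switch gadget as sketched conflicts with the fixed-$\ell$ constraint. To repair the argument you would essentially be led to the paper's reformulation in which $\ell$ counts edges rather than vertices.
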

\begin{proof}
  Let $c$ be any non-negative integer. We will give a reduction from \clique\ to
  \apklmaxbis{$c$} which is both an FPT Turing reduction and a polynomial-time
  Turing reduction.
  The theorem then follows from the fact that \clique\ is both NP-hard
  \cite[Theorem~7.32]{Sipser}) and \Wo-hard \cite[Theorem~21.3.4]{Downey}.

  Let $(G,k)$ be an instance of \clique{} with $G = (V,E)$ and $n = |V|$.
  We use a standard powering construction to produce an intermediate instance
  $(G',k)$ of \clique{} with $G'=(V',E')$.
  More precisely, let $t = n^{2c}$, let $C$ be a set of $k$ new vertices, and
  let $V' = (V \times [t]) \cup C$.
  We define $E'$ such that
  \[
  E' = \big\{\{(u,i),(v,j)\}\mid \{u,v\} \in E, i,j \in [t]\big\} \cup
  \big\{\{u,v\} \mid u,v \in C,u\ne v\big\}\,.
  \]

  From $(G',k)$, we construct an instance $(G'',\ell)$ of \apklmaxbis{$c$} with
  $G''=(U,V',E'')$ and $\ell=\binom{k}{2}$.
  For this, let $U = \{u_e \mid e \in E'\}$ be a set of vertices and
  let $E'' = \{(u_e,v) \mid e\in E',\,v \in e\}$.
  The reduction queries the oracle for $(G'',\ell)$, which yields an approximate
  value~$z$ for the number $\MAXLIS{\ell}(G'')$.
  If $z\le n^c$, the reduction returns `no', there is no $k$-clique in~$G$, and
  otherwise it returns `yes'.
  It is obvious that the reduction runs in polynomial time.

  It remains to prove the correctness of the reduction.
  Let $\NoClique_k(G)$ be the number of $k$-cliques in $G$.
  The $\ell$-left-maximum independent sets $X$ of $G''$ correspond bijectively
  to the size-$\ell$ edge sets $\{e \mid u_e \in X \cap U\}$ of $G'$ which span
  a minimum number of vertices. Note that any set of $\ell = \binom{k}{2}$ edges
  must span at least $k$ vertices, with equality only in the case of a
  $k$-clique. Since~$G'$ contains at least one $k$-clique (induced by $C$), we
  have $\MAXLIS{\ell}(G'') = \NoClique_k(G')$. Moreover, each $k$-clique $X$ in
  $G$ corresponds to a 
  size-$t^k$ family of  
  $k$-cliques in $G'$.
  Each $k$-clique in the family consists of exactly one vertex from each set
  $\{x\} \times [t]$ such that $x \in V(X)$. This accounts for all $k$-cliques
  in $G'$ except
  $G'[C]$. Thus
  \begin{equation}\label{eqn:kl-max-approx-1}
  \MAXLIS{\ell}(G'') = \NoClique_k(G') = t^k\NoClique_k(G)+1.
  \end{equation}
  
  Let $z$ be the result of applying our oracle to $(G'',\ell)$.
  If $G$ contains no $k$-cliques, then by~\eqref{eqn:kl-max-approx-1} we have $z
  \le n^c\cdot\MAXLIS{\ell}(G'') = n^c$ and the reduction returns `no'.
  Otherwise, we have $z \ge n^{-c}\cdot\MAXLIS{\ell}(G'') \ge 
  n^{-c}(t^k+1) > n^c$
  and the reduction returns `yes'.
  Thus the reduction is correct and the theorem follows.
\end{proof}
 
\begin{thm}\label{thm:kl-approx}
  For all $c \ge 0$, \apklbis{$c$} is both \textnormal{NP}-hard and \textnormal{\Wo}-hard.
\end{thm}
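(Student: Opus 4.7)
The plan is to imitate the reduction from \clique\ used in the proof of Theorem~\ref{thm:kl-max-approx}, and then to follow it by a ``right-side blowup'' that makes the $\ell$-left-maximum independent sets asymptotically dominate the full count $\LIS{\ell}$.

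In more detail, given an instance $(G,k)$ of \clique\ with $|V(G)|=n$, first run the construction from the proof of Theorem~\ref{thm:kl-max-approx} to obtain a bipartite graph $G''=(U,V',E'')$ together with $\ell=\binom{k}{2}$. Recall that each $\ell$-subset $X\subseteq U$ corresponds to a choice of $\ell$ edges in an auxiliary graph $G'$ that span a set $N_{G''}(X)\subseteq V'$ of size at least~$k$, with equality iff the edges form a $k$-clique of $G'$; consequently $\MAXLIS{\ell}(G'')=t^k\NoClique_k(G)+1$, where $t=n^T$ for a sufficiently large exponent $T=T(c)$ to be fixed. Now form $H$ by blowing up the right side of $G''$: the left side of $H$ is $U$, the right side is $V'\times[s]$, and $u_e\in U$ is adjacent to $(v,i)$ iff $v\in e$, where $s$ is polynomial in $n$.

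A direct calculation shows
\[\LIS{\ell}(H)=\sum_{X\subseteq U,\,|X|=\ell}2^{s(|V'|-|N_{G''}(X)|)}=2^{s(|V'|-k)}\bigl(\MAXLIS{\ell}(G'')+\theta\bigr),\]
with an error term $0\le\theta\le\binom{|E'|}{\ell}\cdot 2^{-s}$; choosing $s=\Theta(\ell\log|E'|)$ forces $\theta\le\tfrac12$. Querying the \apklbis{$c$}\ oracle on $(H,\ell)$ yields $z$ with $(n')^{-c}\LIS{\ell}(H)\le z\le(n')^c\LIS{\ell}(H)$, where $n'=|V(H)|$. Setting $\hat Z=z/2^{s(|V'|-k)}$ and distinguishing cases, if $G$ has no $k$-clique then $\MAXLIS{\ell}(G'')=1$, so $\hat Z\le\tfrac32(n')^c$; if $G$ has a $k$-clique then $\MAXLIS{\ell}(G'')\ge t^k+1$, so $\hat Z\ge(n')^{-c}(t^k+1)$. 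These ranges are disjoint whenever $t^k\ge 2(n')^{2c}$. Since $n'$ is polynomial in $n$ with degree depending only on $c$ (through $T$ and $s$), this inequality can be forced by taking $T$ sufficiently large once $k$ exceeds a constant $k_0=k_0(c)$; for $k\le k_0$, decide \clique\ directly by brute force in $O(n^{k_0})$ time. The reduction is simultaneously polynomial-time Turing and an FPT Turing reduction with output parameter $\binom{k}{2}$, so NP-hardness follows from~\cite[Thm~7.32]{Sipser} and \Wo-hardness from~\cite[Thm~21.3.4]{Downey}.

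The principal technical task will be to calibrate the two polynomial parameters $t$ and $s$: $s$ must be large enough to drive the error $\theta$ below $\tfrac12$, yet small enough to keep $n'$ polynomial, while $t^k$ must exceed the oracle's approximation slack $(n')^{2c}$. This balancing is straightforward once $k$ is sufficiently large in terms of $c$, which is why the small-$k$ case is handled separately by brute force.
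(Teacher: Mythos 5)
Your reduction is correct, and it rests on the same key gadget as the paper's proof: blowing up the right-hand side of a bipartite graph so that the $\ell$-left-\emph{maximum} independent sets dominate the total count $\LIS{\ell}$, after which a polynomial-factor approximation of $\LIS{\ell}$ reveals information about $\MAXLIS{\ell}$. The difference is organizational. The paper proves the theorem by a single generic, approximation-preserving reduction from \apklmaxbis{$(c+1)$} to \apklbis{$c$}: it blows up the right side of an arbitrary instance $(G,\ell)$ by $t=6n$, recovers $\maxleft{\ell}(G)$ by rounding $\ell+\lg(z)/\lg(2^t-1)$, and then divides out $(2^t-1)^{\mu-\ell}$ to get an $n^{c+1}$-approximation of $\MAXLIS{\ell}(G)$; hardness then follows by composing with Theorem~\ref{thm:kl-max-approx}. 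You instead inline the \clique-to-\klmaxbis\ construction and extract only a decision gap ($\hat Z\le\tfrac32(n')^c$ versus $\hat Z\ge (n')^{-c}(t^k+1)$), which obliges you to recalibrate the powering exponent against the oracle's slack $(n')^{2c}$ measured on the blown-up graph and to brute-force the finitely many $k\le k_0(c)$; your calibration is sound, since $T$ can be fixed as a function of $c$ alone, keeping the reduction simultaneously polynomial-time and FPT with output parameter $\binom{k}{2}$. What the paper's modular route buys is the reusable and strictly stronger fact that \apklbis{$c$} can \emph{approximate} \apklmaxbis{$(c+1)$}, plus parameter choices independent of $c$; what yours buys is a self-contained argument that never needs to recover an approximate count from the oracle. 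One cosmetic point: make sure the brute-force cutoff is at least $2$, since the identity $\MAXLIS{\ell}(G'')=\NoClique_k(G')$ degenerates at $k=1$ (where $\ell=0$).
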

\begin{proof}
  Let $c$ be any non-negative integer. We will give a reduction from the problem \apklmaxbis{$(c+1)$} to the problem \apklbis{$c$} which is both an FPT Turing reduction and a polynomial-time Turing reduction.
  The result then follows by Theorem~\ref{thm:kl-max-approx}.
  
  Let $(G,\ell)$ be an instance of \apklmaxbis{$c$}. Write $G = (U,V,E)$, let $n=|V(G)|$, and let $t = 6n$. Without loss of generality, suppose $n \ge 5$ and that $n$ is sufficiently large that $n^c2^{-n} \le 1$. Let $V' = V \times [t]$, let $E' = \{(u,(v,i)) \mid (u,v) \in E,\, i \in [t]\}$, and let $G' = (U,V',E')$. Let $\mu = \maxleft{\ell}(G)$, and let $z$ be the result of applying our oracle to $(G',\ell)$.
   
  For any non-negative integers $i$ and $j$, we define $\LRIS{i}{j}(G)$ to be the number of independent sets $X \subseteq V(G)$ with $|X \cap U| = i$ and $|X \cap V| = j$. Each $\ell$-left independent set $X$ of $G$ corresponds to the family of $\ell$-left independent sets of $G'$ consisting of $X \cap U$ together with at least one vertex from each set $\{x\} \times [t]$ such that $x \in X \cap V$. Thus by the definition of $\mu$,
  \begin{equation}\label{eqn:kl-approx}
    \LIS{\ell}(G') = \sum_{r=0}^{\mu-\ell}\LRIS{\ell}{r}(G)(2^t-1)^{r}.
  \end{equation}
  Since $G$ contains at most $2^n$ independent sets and $\LRIS{\ell}{\mu-\ell}(G) \ge 1$, we have
  \begin{equation*}
    (2^t-1)^{\mu-\ell} \le \LIS{\ell}(G') \le 2^n(2^t-1)^{\mu-\ell}.
  \end{equation*}
  Since $n^c \le 2^n \le (2^t-1)^{1/5}$, it follows that $(2^t-1)^{\mu-\ell-1/5} \le z \le (2^t-1)^{\mu-\ell+2/5}$, and hence the algorithm can obtain $\mu$ by rounding $\ell+\lg(z)/\lg(2^t-1)$ to the nearest integer. Moreover, by \eqref{eqn:kl-approx} we have
  \begin{align*}
  \LIS{\ell}(G') 
  &\le \LRIS{\ell}{\mu-\ell}(G)(2^t-1)^{\mu-\ell} + 2^{n}(2^t-1)^{\mu-\ell-1} \le 2\LRIS{\ell}{\mu-\ell}(G)(2^t-1)^{\mu-\ell}.
  \end{align*}
  It follows that $\LRIS{\ell}{\mu-\ell}(G)\le \LIS{\ell}(G')/(2^t-1)^{\mu-\ell} \le 2\LRIS{\ell}{\mu-\ell}(G)$, and hence that 
  \[
  n^{-c-1}\LRIS{\ell}{\mu-\ell}(G)\le z/(2^t-1)^{\mu-\ell} \le n^{c+1}\LRIS{\ell}{\mu-\ell}(G).
  \]
  The algorithm therefore outputs $z/(2^t-1)^{\mu-\ell}$.
\end{proof}

The following well-known Chernoff bound appears in e.g.,\ Janson, {\L}uczak and Rucinski~\cite[Corollary 2.3]{JLR2000}.
\begin{lem}\label{lem:chernoff}
  If $X \sim \textnormal{Bin}(n,p)$ is a binomial variable and $0 < \eps \le 3/2$, then
  \[\pr(|X - \E(X)| \ge \eps\E(X)) \le 2e^{-\eps^2\E(X)/3}.\]
  \qed
\end{lem}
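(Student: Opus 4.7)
The statement is the standard two-sided multiplicative Chernoff bound, and I would give the textbook proof via the Chernoff--Cram\'er exponential moment method. The plan has three steps, and only the last one involves any genuine content.

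The first step is to set up the exponential Markov inequality and exploit independence. Write $X = \sum_{i=1}^{n} X_i$ as a sum of independent Bernoulli$(p)$ variables. Then for any $t\in\R$ the moment generating function factorises as $\E[e^{tX}]=\prod_{i=1}^{n}\E[e^{tX_i}]=(1-p+pe^{t})^{n}$, and the elementary inequality $1+x\le e^{x}$ gives $\E[e^{tX}]\le \exp\bigl(np(e^{t}-1)\bigr)=\exp\bigl(\mu(e^{t}-1)\bigr)$, where $\mu=\E[X]=np$. Applying Markov to $e^{tX}$ for $t>0$ yields $\pr(X\ge a)\le e^{-ta}\cdot\exp(\mu(e^{t}-1))$, and the analogous argument applied to $e^{-tX}$ yields $\pr(X\le a)\le e^{ta}\cdot\exp(\mu(e^{-t}-1))$.

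The second step is to optimise the free parameter $t$ in each direction. For the upper tail, take $a=(1+\eps)\mu$ and choose $t=\log(1+\eps)$; the exponent becomes $\mu\bigl(\eps-(1+\eps)\log(1+\eps)\bigr)$. For the lower tail, take $a=(1-\eps)\mu$ (with $0<\eps<1$) and choose $t=-\log(1-\eps)$; the exponent becomes $\mu\bigl(-\eps-(1-\eps)\log(1-\eps)\bigr)$.

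The third step, which is the only non-routine piece and the main obstacle, is a pair of calculus estimates: one must verify $\eps-(1+\eps)\log(1+\eps)\le -\eps^{2}/3$ on $(0,3/2]$ and $-\eps-(1-\eps)\log(1-\eps)\le -\eps^{2}/2$ on $(0,1)$. Both follow from Taylor expansion at $\eps=0$, or equivalently by defining $f(\eps)$ to be the difference of the two sides and checking that $f(0)=0$ and $f'(\eps)$ has the appropriate sign throughout the interval. The restriction $\eps\le 3/2$ in the hypothesis is exactly the range on which the looser denominator $3$ (rather than the sharper $2$) remains valid for the upper tail, so the choice of constants in the lemma is tight against this elementary inequality. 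Combining the two one-sided bounds by a union bound, and noting that $e^{-\eps^{2}\mu/2}\le e^{-\eps^{2}\mu/3}$, produces the factor of $2$ in the stated inequality.
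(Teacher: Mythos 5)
The paper does not actually prove this lemma: it is cited verbatim from Janson, {\L}uczak and Ruci\'nski \cite[Corollary~2.3]{JLR2000}, and the \qed{} in the statement signals that no argument is given. So there is no in-paper proof to compare against. Your proof is the standard Chernoff--Cram\'er argument and is essentially correct, but two small points deserve attention.

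First, the calculus step is slightly misstated. For the upper tail you set $f(\eps)=\eps-(1+\eps)\log(1+\eps)+\eps^2/3$ and claim it suffices to observe $f(0)=0$ and that $f'$ ``has the appropriate sign throughout the interval.'' In fact $f'$ is not of constant sign on $(0,3/2]$: one has $f'(0)=0$, $f''(\eps)=-\tfrac{1}{1+\eps}+\tfrac{2}{3}$ which is negative on $(0,1/2)$ and positive on $(1/2,3/2]$, so $f'$ decreases and then increases, crossing zero once (around $\eps\approx 1.1$). Consequently $f$ itself decreases and then increases, and the argument needs the explicit endpoint check $f(3/2)=\tfrac{3}{2}-\tfrac{5}{2}\log\tfrac{5}{2}+\tfrac{3}{4}<0$ to conclude $f\le 0$ on the whole interval. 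This is routine but not the one-line sign check you describe. Second, your remark that the restriction $\eps\le 3/2$ is ``exactly the range'' on which the constant $3$ remains valid overstates the tightness: the inequality $\eps-(1+\eps)\log(1+\eps)\le -\eps^2/3$ in fact holds up to roughly $\eps\approx 1.8$; the value $3/2$ in JLR is a convenient conservative choice, not a sharp threshold. Neither issue is a genuine gap, but the derivative-sign claim should be repaired as above before this could stand as a complete proof.
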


  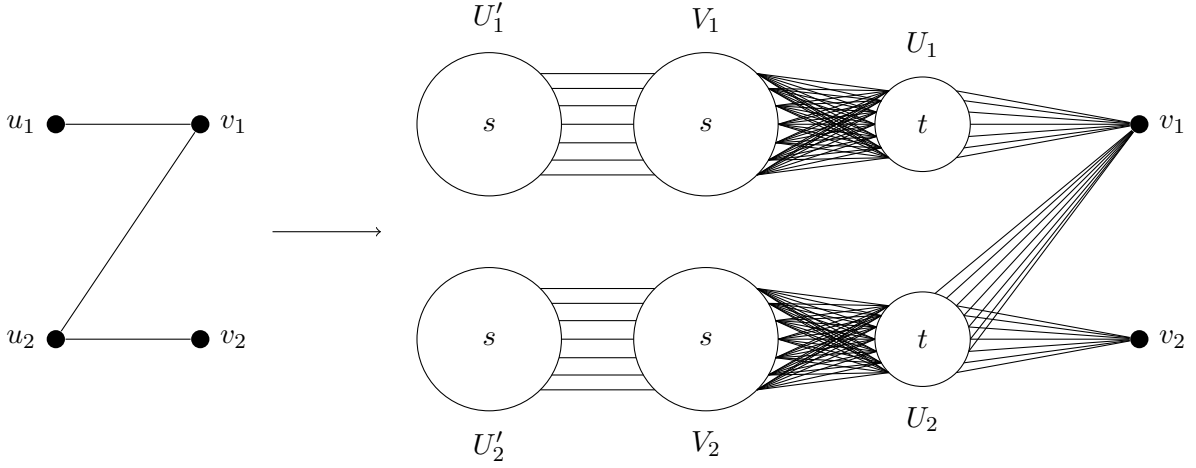
\begin{figure}
    \centering
      \begin{tikzpicture}[scale=0.95]
        \tikzstyle{vertex} = [circle, inner sep = 0.85mm, fill=black]
        \node (u1)  at (0,3) [vertex, label = left: $u_1$] {};
        \node (u2)  at (0,0) [vertex, label = left: $u_2$] {};
        \node (v1)  at (2,3) [vertex, label = right:$v_1$] {};
        \node (v2)  at (2,0) [vertex, label = right:$v_2$] {};
        \node (U1') at (6,3) {$s$};
        \node (U2') at (6,0) {$s$};
        \node (V1)  at (9,3) {$s$};
        \node (V2)  at (9,0) {$s$};
        \node (U1)  at (12,3) {$t$};
        \node (U2)  at (12,0) {$t$};
        \node (v1') at (15,3) [vertex, label=right:$v_1$] {};
        \node (v2') at (15,0) [vertex, label=right:$v_2$] {};
        
        \def \bigr{1}
        \def \smlr{0.66}
        
        \draw (U1') circle [radius=\bigr];
        \draw (U2') circle [radius=\bigr];
        \draw (V1)  circle [radius=\bigr];
        \draw (V2)  circle [radius=\bigr];
        \draw (U1)  circle [radius=\smlr];
        \draw (U2)  circle [radius=\smlr];
        
        \draw (u1) -- (v1) -- (u2) -- (v2);

        \foreach \i in {1,2} {
          \foreach \x in {-3, ..., 3} {
            \draw (v\i') -- ($(U\i) + (15*\x:\smlr)$);
            \draw ($(V\i) + (180+15*\x:\bigr)$) -- ($(U\i') + (-15*\x:\bigr)$);
            \foreach \y in {-3, ..., 3} {
              \draw ($(U\i) + (180+15*\x:\smlr)$) -- ($(V\i)+(15*\y:\bigr)$);
            };
          };
        };
        \foreach \x in {-1, ..., 5}
          \draw (v1') -- ($(U2)+(15*\x:\smlr)$);
          
        \node at ($(U1') + (0,\bigr)$) [label=above:$U_1'$] {};
        \node at ($(V1)  + (0,\bigr)$) [label=above:$V_1$]  {};
        \node at ($(U1)  + (0,\smlr)$) [label=above:$U_1$]  {};
        \node at ($(U2') - (0,\bigr)$) [label=below:$U_2'$] {};
        \node at ($(V2)  - (0,\bigr)$) [label=below:$V_2$]  {};
        \node at ($(U2)  - (0,\smlr)$) [label=below:$U_2$]  {};
        
        \draw [->] (3, 1.5) -- (4.5,1.5);
      \end{tikzpicture}
    \caption{An example of the reduction from \apklbis{$(c+1)$} to \apklbis{$c$} used in the proof of Theorem~\ref{thm:k-approx} when $G = P_3$. Each vertex $u_i \in U$ is replaced by three vertex sets $U_i'$, $V_i$ and $U_i$ in the resulting graph $G'$. Note that $G'$ does not depend on the input parameter $\ell$.}\label{fig:k-approx}
  \end{figure}

\begin{thm}\label{thm:k-approx} For all $c \ge 0$, \apkbis{$c$} is \textnormal{NP}-hard.
\end{thm}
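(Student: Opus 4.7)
The plan is a polynomial-time Turing reduction from \apklbis{$(c')$} (for a sufficiently large constant $c' = c'(c)$ to be chosen) to \apkbis{$c$}; since Theorem~\ref{thm:kl-approx} states that \apklbis{$(c')$} is NP-hard, this will yield NP-hardness of \apkbis{$c$}. Note that this reduction cannot also be an FPT Turing reduction, because \apkbis\ admits an FPTRAS by Theorem~\ref{thm:kbis-fptras}; hence the query size $k$ will depend polynomially on~$n$ rather than just on~$\ell$.

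Given an instance $(G,\ell)$ of \apklbis{$(c')$} with $G=(U,V,E)$, $m=|U|$ and $n_V=|V|$, I construct the bipartite graph $G'$ depicted in Figure~\ref{fig:k-approx}. For each $u_i \in U$, I introduce three new vertex sets $U_i',V_i,U_i$ of sizes $s,s,t$ respectively (for polynomial-size parameters $s,t$ to be chosen); within each gadget, $V_i$ is joined by complete bipartite graphs to both $U_i'$ and $U_i$; the original vertices of $V$ are retained; and each $v \in V$ is made adjacent to every vertex of $U_i$ whenever $(u_i,v)\in E$. Crucially, the graph $G'$ does not depend on~$\ell$. To analyse $\IS_k(G')$, I classify each IS~$S$ of $G'$ by the \emph{mode} of each gadget: gadget~$i$ is in mode~C if $V_i \cap S \neq \emptyset$, mode~A if $V_i \cap S = \emptyset = U_i \cap S$, or mode~B if $V_i \cap S = \emptyset$ and $U_i \cap S \neq \emptyset$. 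Writing $I_B$ for the set of mode-B gadgets, the external set $T := S \cap V$ must avoid $\bigcup_{i \in I_B}\Gamma(u_i)$. A direct count shows that the number of ISes with mode pattern~$I_B$ equals $(2^s(2^t-1))^{|I_B|}(2^{s+1}-1)^{m-|I_B|}\cdot 2^{|V \setminus N(I_B)|}$, so summing over $I_B \in \binom{[m]}{\ell}$ gives exactly $C \cdot \LIS{\ell}(G)$, where $C := (2^s(2^t-1))^{\ell}(2^{s+1}-1)^{m-\ell}$ is an easily computable constant.

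The key step is to choose a single query size $k^*$ such that $\IS_{k^*}(G')$ is controlled by the $|I_B|=\ell$ contributions up to a polynomial factor. I take $k^*$ near the expected size $\mu_\ell$ of a random IS of $G'$ sampled uniformly at random conditional on $|I_B| = \ell$. Viewing such a random IS as a sum of independent Bernoulli indicators (one per ``free'' vertex of each gadget and one per vertex of $V \setminus N(I_B)$), Lemma~\ref{lem:chernoff} bounds the probability that the size deviates from $\mu_\ell$ by more than $O(\sqrt{n' \log n})$ by $n^{-\Omega(1)}$, where $n' := |V(G')| = \mathrm{poly}(n)$. The analogous expected sizes $\mu_i$ for different values of $i := |I_B|$ differ by approximately $t/2$, so by choosing $t \gg \sqrt{n' \log n}$, the Chernoff bound forces all contributions to $\IS_{k^*}(G')$ with $i \neq \ell$ to be an inverse-polynomial fraction of the $i=\ell$ contribution. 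The main obstacle is to obtain the two-sided estimate $\IS_{k^*}(G') = n^{\pm O(1)} \cdot C \cdot \LIS{\ell}(G)$: beyond the Chernoff upper bound, a matching lower bound on the peak coefficient $\IS_{k^*}(G')\big|_{|I_B|=\ell}$ is needed, which will require a local-limit-theorem-style estimate (or a concrete lower bound exploiting log-concavity of the appropriate generating function) rather than a bare Chernoff tail bound. Once this is established, dividing the oracle's $(n')^c$-factor approximation of $\IS_{k^*}(G')$ by the explicit prefactor~$C$ yields a polynomial-factor approximation of $\LIS{\ell}(G)$, which contradicts the NP-hardness of \apklbis{$(c')$} for a sufficiently large constant~$c'$.
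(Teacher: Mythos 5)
There is a genuine gap, and it is in the gadget itself, not just in the local-limit-theorem step you flag at the end. You have the construction of $G'$ wrong: in the paper (and in Figure~\ref{fig:k-approx}), $V_i$ is joined to $U_i$ by a complete bipartite graph but to $U_i'$ only by a \emph{perfect matching}, and the sizes are tuned so that $t = \floor{s\log_2 3}-s$. With that gadget, the number of internal configurations when gadget $i$ is "on" ($U_i\cap S\ne\emptyset$) is $(2^t-1)2^s \approx 2^{s+t}$, and when it is "off" it is $3^s$ (three choices per matched pair $\{u_{i,j},v_{i,j}\}$); the choice of $t$ makes these two counts equal up to a factor of about $2$, so the total count ratio between different values of $|I_B|$ is only $3^{O(n)}$. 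Your version, with a complete bipartite graph between $U_i'$ and $V_i$, gives counts $2^s(2^t-1)$ versus $2^{s+1}-1$, i.e.\ a per-gadget ratio of roughly $2^{t-1}$. That ratio is fatal: the mean sizes of consecutive levels $|I_B|=i$ differ by about $t/2$, while the standard deviation of $|S|$ is $\Theta(\sqrt{m(s+t)})$, so the best separation any Chernoff-type bound can give between adjacent levels is $\exp(-O(t^2/(m(s+t)))) = \exp(-O(t/m))$. To suppress the neighbouring level you would need $\exp(-O(t/m))\cdot 2^{t}\ll 1$, i.e.\ $t \gtrsim m(s+t)$, which is impossible for any choice of $s,t$. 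So with your gadget the contributions with $|I_B|>\ell$ swamp the $|I_B|=\ell$ term at every query size, and the reduction fails.

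The second issue is the one you acknowledge: querying a single $k^*$ forces you to prove an anti-concentration \emph{lower} bound on the coefficient at the peak, which you leave open. The paper sidesteps this entirely by calling the \apkbis{$c$} oracle for \emph{every} $k$ in a window $|k-E(\ell)|\le s/20+n$ around the level-$\ell$ mean and summing the answers: the window captures at least half (in fact all but an $e^{-\Omega(n^2)}$ fraction) of the level-$\ell$ mass by a plain Chernoff upper bound on the complement, the windows for distinct levels are disjoint, and the contamination from other levels is $e^{-\Omega(n^2)}\cdot 3^{O(n)}$ per independent set of $G$. If you repair the gadget (matching plus $t=\floor{s\log_2 3}-s$) and replace the single query by this window-summing device, your argument becomes the paper's proof; as written, both the construction and the single-query analysis are broken.
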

\begin{proof}
  
For all $c\geq0$, we give a polynomial-time Turing reduction from the problem
 \apklbis{$(c+1)$} to the problem \apkbis{$c$}. The result then follows from Theorem~\ref{thm:kl-approx}.
  
Fix $c\geq 0$ and let   $(G,\ell)$ be an instance of \apklbis{$(c+1)$}. 
Suppose that $G = (U,V,E)$ where  $U = \{u_1, \dots, u_p\}$. 
Note from the problem definition that $n=|U\cup V|$ and 
suppose without loss of generality that $\ell\in[p]$ and that $n\geq 40$ 
(otherwise, $(G,\ell)$ is an easy instance of \apklbis{$(c+1)$},
so the answer can be computed, even without using the oracle).

Let $s = 2n^6$ and $t = \floor{s\log_23}-s$.
For each $i \in [p]$, let $U_i$, $V_i$ and $U_i'$ be disjoint sets of vertices with $|U_i'| = |V_i| = s$ and $|U_i| = t$. Write $U_i' = \{u_{i,1},\dots,u_{i,s}\}$ and $V_i = \{v_{i,1},\dots,v_{i,s}\}$. Then let  $U' = \bigcup_{i \in [p]} (U_i \cup U_i')$,
$V' = \bigcup_{i \in [p]} V_i \cup V$, and
$$ E' = \bigcup_{i \in [p]} \Big((U_i \times V_i) \cup \{(u_{i,j},v_{i,j}) \mid j \in [s]\}\Big) \cup \bigcup_{(u_j,v) \in E(G)} (U_j \times \{v\}).$$
Let $G' = (U',V',E')$, 
as depicted in Figure~\ref{fig:k-approx}. 

Intuitively, the proof will proceed as follows. We will map independent sets $X'$ of $G'$ to independent sets $X$ of $G$ by taking $X \cap V = X'\cap V$ and adding each $u_i \in U$ to $X$ if and only if $U_i \cap X' \ne \emptyset$. We will show that roughly half the independent sets of each gadget $U_i' \cup V_i \cup U_i$ have this form. We will also show that within each gadget, almost all independent sets with vertices in $U_i$ have size roughly $(s+t)/2$, and almost all others have size roughly $2s/3$. Thus the independent sets in $G$ with $\ell$ vertices in $U$ roughly correspond to the independent sets in $G'$ of size roughly $\ell\cdot(s+t)/2 + (p-\ell)\cdot2s/3$, which we count using a \kbis\ oracle.

We start by defining disjoint sets of independent sets of~$G'$.
For $x \in  \{0,\ldots,p\}$, let
$E(x) = \frac{2s}{3}(p-x) + \frac{s+t}{2}x$
and let 
$$ 
\mathcal{A}_{x} = \Big\{X' \subseteq V(G') \Bigm| \mbox{$X'$ is an independent set of~$G'$ and $\big||X'| - E(x)\big| \le \frac{s}{20}+n$}\Big\}.  
$$
Note that since $n\geq 3$, we have $t> 17s/30$ and $120 n \leq s$.
Thus, if $x'>x$,
$$E(x')-E(x) = \left(\frac{t}{2}-\frac{s}{6}\right)(x'-x) > \left(\frac{17}{60}-\frac{1}{6}\right)s = \frac{s}{10}+\frac{s}{60} \geq\frac{s}{10}+2n.$$
We conclude that the sets $\mathcal{A}_0,\ldots,\mathcal{A}_{p}$ are disjoint. 

Next, we connect the independent sets of~$G'$ with those of~$G$.
Each independent set $X'$ of $G'$ projects onto the independent set $(X' \cap V) \cup \{u_i \mid X' \cap U_i \ne \emptyset\}$ of $G$.
Given an independent set~$X$ of~$G$,
let $\varphi(X)$ be the set of independent sets $X'$ of $G'$ which project onto $X$. If $u_i \in X$, then there are $2^{t}-1$ possibilities for $X' \cap U_i$ and $2^s$ possibilities for $X' \cap U_i'$, but $X' \cap V_i$ is empty. If $u_i \notin X$, then $X' \cap U_i$ is empty and there are $3^s$ possibilities for $X' \cap (U'_i \cup V_i)$. 
For $x\in \{0,\ldots,p\}$, let
$F(x) = (2^{s+t}-2^s)^{x}\cdot3^{(p-x)s}$.
It follows that, 
for any $x$-left independent set~$X$ of~$G$, 
 $|\varphi(X)| = F(x)$, which establishes the first of the following claims.
 
 \begin{description}
 \item Claim~1. For any   $\ell$-left independent set~$X$ of~$G$,
 $|\varphi(X) \cap \mathcal{A}_\ell| \leq F(\ell)$.
 \item Claim~2. For any   $\ell$-left independent set~$X$ of~$G$,
  $|\varphi(X) \cap \mathcal{A}_\ell| \geq F(\ell)/2$.
  \item Claim~3. For any $x\in\{0,\ldots,p\}\setminus \{\ell\}$ 
  and any $x$-left independent set~$X$ of~$G$,
   $|\varphi(X) \cap \mathcal{A}_{\ell} | \leq F(\ell)/2^n$.
 \end{description}
 
The proofs of Claims~2 and~3 are mere calculation, so before proving them we use the claims to complete the
proof of the lemma.
Recall that $(G,\ell)$ is an instance of 
\apklbis{$(c+1)$} with $\ell\in[p]$ and $n\geq 2$.
Together,  the claims imply
\begin{equation}\label{eq:dotwo} \frac{F(\ell)}{2} \LIS{\ell}(G) \leq |\mathcal{A}_\ell| \leq F(\ell) \LIS{\ell}(G) + F(\ell),
\end{equation}
where the final $F(\ell)$ comes from the contribution to $|\mathcal{A}_\ell|$
corresponding to the (at most $2^n$)   independent sets of~$G$  
that are not $\ell$-left independent sets.
Since $\ell\in[p]$, the quantity $\LIS{\ell}(G) $
is at least~$1$, which means that  the right-hand side of~\eqref{eq:dotwo} is at most $2F(\ell)\LIS{\ell}(G)$.
Also,   $F(\ell)>0$. Thus,
  \eqref{eq:dotwo} implies 
$$
\frac  {\LIS{\ell}(G)}{2} \leq \frac{|\mathcal{A}_\ell|}{F(\ell)} \leq  2 \LIS{\ell}(G).$$
The oracle for \apkbis{$c$}
can be
used to 
compute a number $z$ such that
$n^{-c} |\mathcal{A}_\ell| \leq z \leq n^c |\mathcal{A}_\ell|$.
(To do this, just call the oracle repeatedly with input~$G'$
and with every non-negative integer~$k$ such that 
$|k-E(\ell)| \leq \frac{s}{20}+n$, adding the results.)
Thus,
$$
n^{-c} \frac  {\LIS{\ell}(G)}{2} \leq n^{-c} \frac{|\mathcal{A}_\ell|}{F(\ell)} 
\leq \frac{z}{F(\ell)} \leq n^c \frac{|\mathcal{A}_\ell|}{F(\ell)} 
\leq  2 n^c \LIS{\ell}(G),$$
so the desired approximation of  $\LIS{\ell}(G)$ can be achieved by dividing~$z$
by~$F(\ell)$.
We now complete the proof by proving  Claims~2 and~3.
  
\noindent \textbf{Proof of Claim~2:\quad}
Consider any $x\in  \{0,\ldots,p\}$ and let $X$ be an $x$-left independent set of~$G$.
We will show
$|\varphi(X) \cap \mathcal{A}_x| \geq F(x)/2$, which implies the claim by taking $\ell=x$.
In fact, we will establish the much stronger inequality
\begin{equation}
|\varphi(X) \cap \mathcal{A}_x| \geq (1-3ne^{-n^2})F(x),
\label{eq:uselaterJan}
\end{equation} 
which will also be useful in the proof of Claim~3.  
To establish Equation~\eqref{eq:uselaterJan}
we will show that 
the probability that a random element~$Y$ of~$\varphi(X)$
satisfies $\big||Y| - E(x)\big| \le \frac{s}{20}+n$
is at least $1-3ne^{-n^2}$.

So let $Y$ be a uniformly random element of $\varphi(X)$.
We will show that, with probability at least $1-3ne^{-n^2}$, the  following bullet points hold.
\begin{itemize}
\item For all $i\in[p]$ with $u_i\notin X$, we have 
$\left||Y \cap (U_i \cup V_i \cup U'_i) |-\frac{2s}{3}\right| \le \frac{s}{n^2} $, and
\item for all $i\in[p]$ with $u_i \in X$, we have 
$\left||Y \cap (U_i \cup V_i \cup U'_i)| - \frac{s+t}{2} \right| \le \frac{s+t}{n^2}$,
\end{itemize}
Since 
$n\geq 40$, we have
$(p-x) s/n^2 +  x(s+t)/n^2 \leq 2ps/n^2  \leq s/20$ and $|Y \cap V|\leq n$, so the claim follows.
To obtain the desired failure probability, we will show that,  for any $i\in[p]$, the probability that the relevant bullet point fails to hold
is at most $3 e^{-n^2}$ (so the total failure probability is at most $3ne^{-n^2}$, by a union bound).
 
First, consider any $i\in[p]$ with $u_i \notin X$.
In this case, $Y\cap (U_i \cup V_i \cup U'_i)$ is
generated   by including (independently for each $j \in [s]$) 
one of three possibilities:
(i) $u_{i,j}$ but not $v_{i,j}$,
(ii) $v_{i,j}$ but not $u_{i,j}$, or
(iii) neither $u_{i,j}$ nor $v_{i,j}$. Each of the three choices is equally likely.  
Thus $|Y\cap (U_i \cup V_i \cup U'_i)|$ is distributed binomially with mean $2s/3$, so 
by a Chernoff bound (Lemma~\ref{lem:chernoff}),
the probability that the first bullet point fails for~$i$ is at most
$ 2e^{-s/2n^4} < 3 e^{-n^2}$, as desired.

Second, consider any $i\in[p]$ with $u_i \in X$.
In this case, $Y \cap (U_i \cup V_i \cup U'_i)$ is chosen uniformly from all subsets of
$U_i \cup U'_i$ that contain at least one element of~$U_i$.
The total variation distance between the uniform distribution on these subsets and the
uniform distribution on all subsets of $U_i \cup U'_i$ is at most $2^{-t}$.
Also, by a Chernoff bound (Lemma~\ref{lem:chernoff}),
the probability that a uniformly-random subset of $U_i \cup U'_i$ 
has a size that differs from its mean, $(s+t)/2$, by at least $(s+t)/n^2$ is
at most $2e^{-2(s+t)/(3n^4)}$.
Thus, the probability that the second bullet point fails for~$i$ ist
at most $2^{-t} + 2e^{-2(s+t)/(3n^4)}\le 3e^{-n^2}$, as  desired.

\noindent \textbf{Proof of Claim~3:\quad} 
Suppose that $x \in \{0, \dots, p\} \setminus \{\ell\}$ and that $X$ is an $x$-left independent set of~$G$.
We know from Equation~\eqref{eq:uselaterJan}
that 
$|\varphi(X) \cap \mathcal{A}_{\ell}| \leq 3ne^{-n^2}F(x)$.
We wish to show that this is at most $F(\ell)/2^n$.
Note that  $t\geq1$ and $3^{s-1} \le 2^{s+t} \le 3^s$, so
for all 
$y \in  \{0, \dots, p\}$,   
  \begin{align*}
    F(y) &= (2^{s+t}-2^s)^{y}\cdot3^{ps-ys}\le 2^{y(s+t)} \cdot 3^{p s-y s} \le 3^{p s}, \mbox{and}\\
    F(y) &\ge 2^{y(s+t)-y}\cdot 3^{p s-y s} \ge 3^{ps-2y} \geq 3^{ps - 2n}.
  \end{align*}
The claim follows from
$F(x) \leq 3^{ps} \leq   3^{2n} F(\ell)$ and from the fact that $n\ge 40$. \end{proof}
    
\section{Algorithms}

In this final section, we give our algorithmic results: An FPT randomized
approximation scheme (FPTRAS) for $\kbis$, and an exact FPT-algorithm for all
three problems in bounded-degree graphs.
We define an FPTRAS of \kbis\ as in Arvind and Raman~\cite{Arvind2002}.
\begin{defn}\label{def:FPTRAS}
An \emph{FPTRAS} for \kbis\ is a randomised algorithm that takes as input a
bipartite graph $G$, a non-negative integer $k$, and a real number
$\eps\in(0,1)$ and outputs a real number~$z$. With probability at least $2/3$,
the output~$z$ must satisfy $(1-\eps)\IS_k(G) \le z \le (1+\eps)\IS_k(G)$. Furthermore, there is a function $f:\R\rightarrow\R$ and a polynomial $p$ such that the running time of the algorithm is at most $f(k)\,p(|V(G)|,1/\eps)$.
\end{defn}

\begin{thm}\label{thm:kbis-fptras}
  There is an FPTRAS for \kbis\ with time complexity
  $O\left(2^k\cdot k^2/\eps^2\right)$
  for input graphs with~$n$ vertices and~$m$ edges.
\end{thm}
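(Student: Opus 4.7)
The plan is to construct a Monte Carlo approximation scheme: design an unbiased randomised estimator $X$ for $\IS_k(G)$ with variance-to-mean-squared ratio $O(2^k)$, draw $N = O(2^k/\eps^2)$ independent samples, and output their empirical mean. By Chebyshev's inequality this yields a $(1 \pm \eps)$-approximation with probability at least $2/3$, and at $O(k^2)$ time per sample the total running time is $O(2^k k^2/\eps^2)$ after preprocessing the adjacency structure of~$G$.

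For the estimator I would exploit the unique bipartite decomposition of each $k$-ind-set as $(I \cap U) \sqcup (I \cap V)$. A natural candidate samples a uniformly random bitstring $T \in \{0,1\}^k$, sets $a = |T|$, independently samples uniform $S_U \in \binom{U}{a}$ and $S_V \in \binom{V}{k-a}$, and outputs a reweighted indicator that $S_U \cup S_V$ is independent, with weights chosen so that $\E[X] = \IS_k(G) = \sum_{a=0}^{k} \LRIS{a}{k-a}(G)$ via Vandermonde-style reweighting by $\binom{k}{a}^{-1}$. To handle the "extreme" bipartitions $a = 0$ and $a = k$ (where $S_U \cup S_V$ is automatically independent) more robustly, one can compute the contributions $\binom{|U|}{k}$ and $\binom{|V|}{k}$ exactly in the preprocessing and apply the Monte Carlo estimator only to the "mixed" ind-sets with $1 \le a \le k-1$, which tightens the variance analysis.

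The main obstacle is the second-moment bound. After expansion, it reduces to a combinatorial inequality of the shape
\[
\sum_{a=1}^{k-1} w_a \cdot \LRIS{a}{k-a}(G) \;\leq\; C \cdot \Big(\IS_k(G) - \binom{|U|}{k} - \binom{|V|}{k}\Big)^2
\]
for every bipartite $G = (U,V,E)$, with weights $w_a = \binom{|U|}{a}\binom{|V|}{k-a}/\binom{k}{a}$ coming from the sampling distribution and a constant factor~$C$. The plan is to prove this by a double-counting argument: interpret the right-hand side as ordered pairs of mixed $k$-ind-sets of~$G$, and inject each labelled single ind-set on the left into such a pair by perturbations that swap vertices within the same side of the bipartition, using bipartiteness to preserve independence. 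Once this variance bound is in place, Chebyshev's inequality and averaging conclude: each of the $N$ samples takes $O(k)$ time to draw by reservoir sampling and $O(k^2)$ to verify independence by scanning the $|S_U||S_V| \le k^2/4$ bipartite pairs in $S_U \times S_V$ against the preprocessed adjacency structure of~$G$, and the empirical mean $\bar X$ of the samples satisfies Definition~\ref{def:FPTRAS}.
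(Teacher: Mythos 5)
Your variance bound is false, so the estimator you propose does not have relative second moment $O(2^k)$, and the plan collapses at the step you flagged as the main obstacle. The inequality
\[
\sum_{a=1}^{k-1} \frac{\binom{|U|}{a}\binom{|V|}{k-a}}{\binom{k}{a}} \cdot \LRIS{a}{k-a}(G) \;\le\; C \cdot \Big(\IS_k(G) - \binom{|U|}{k} - \binom{|V|}{k}\Big)^2
\]
cannot hold with $C$ independent of $n$. Take $G=(U,V,E)$ with $|U|=|V|=n/2$, make $u_1$ adjacent to all of $V$ except $v_1,\dots,v_{k-1}$, and make every other $u_i$ adjacent to all of $V$. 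Then the unique mixed size-$k$ independent set is $\{u_1,v_1,\dots,v_{k-1}\}$, so $\LRIS{1}{k-1}(G)=1$ and $\LRIS{a}{k-a}(G)=0$ for $2\le a\le k-1$. Your right-hand side is $C$, but your left-hand side is $\binom{|U|}{1}\binom{|V|}{k-1}/k = \Theta(n^k)$. Equivalently, the relative second moment $\E[X^2]/\E[X]^2$ of your estimator on this instance is $\Theta(2^k n^k)$, so no sample size bounded by a function of $k$ and $\eps$ alone suffices. The proposed swap-injection argument cannot rescue this: there is exactly one ordered pair of mixed ind-sets to inject into, yet $\Theta(n^k)$ weighted mass to place. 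The root cause is that stratifying by the split size $a$ and reweighting forces inverse-sampling-probability factors $\binom{|U|}{a}\binom{|V|}{k-a}/\binom{k}{a}$ that are polynomial in $n$; these are not tamed by the (possibly tiny) corresponding counts $\LRIS{a}{k-a}(G)$.

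The paper's estimator is simpler and sidesteps this entirely: sample $t=O(2^k/\eps^2)$ uniformly random size-$k$ subsets of all of $U\cup V$, let $X$ be the number that are independent sets, and output $X\cdot\binom{n}{k}/t$. Unbiasedness is immediate, and concentration follows from a single structural fact that holds uniformly over all bipartite $G$: since every size-$k$ subset lying entirely inside $U$ or entirely inside $V$ is automatically independent, the success probability $\IS_k(G)/\binom{n}{k}$ is at least $(\binom{|U|}{k}+\binom{|V|}{k})/\binom{n}{k} = 2^{-\Theta(k)}$, and then a Chernoff bound closes the argument. No stratification or importance reweighting is needed, which is exactly what avoids the variance blow-up your scheme suffers. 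If you want to salvage your approach, you would need a sampling distribution whose induced weights are bounded by $2^{O(k)}$ for \emph{every} stratum, not merely on average; uniform sampling over $\binom{U\cup V}{k}$ is the obvious such choice.
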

\begin{proof}
Let $(G,k)$ be an instance of \kbis\ with $G = (U,V,E)$ and $n = |V(G)|$.
Let $\varepsilon > 0$ be the other input of the FPTRAS.  
Let $t = 10\ceil{2^k/\eps^2}$. 
The FPTRAS independently samples~$t$ uniformly-random size-$k$ subsets of $U\cup
V$. 
Let $X$ be the number of independent sets among the samples. 
The output $z$ of the FPTRAS is $z=X\cdot\binom{n}{k}/t$.

Note that $\E(X) = t\cdot\IS_k(G)/\binom{n}{k}$.
We now show that with probability at least $2/3$,
$$(1-\eps)\IS_k(G) \le z \le (1+\eps)\IS_k(G).$$
Since each sample lies entirely within~$U$ or entirely within~$V$ with
probability at least $2^{-k}$, we have $\E(X) \ge t2^{-k} \ge 10/\eps^2$.
By Lemma~\ref{lem:chernoff}, we have 
$$
\pr\Big(|X - \E(X)| \ge \eps\E(X)\Big) \le 2e^{-10/3} < 1/3.
$$
Thus, with probability at least $2/3$, we have $|X - \E(X)| \le \eps\E(X)$, and
so $|z - \IS_k(G)| \le \eps\IS_k(G)$ holds as required.

Recall that we use the word-RAM model, in which operations on $O(\log n)$-sized words take $O(1)$ time. Thus for each of the~$t$ samples, the algorithm generates the sample in $O(k)$ time and makes $\binom{k}{2}$ queries to the graph to
check that the selected set is an independent set. 
The running time is therefore as claimed.
\end{proof}

We now turn to our algorithms for bounded-degree graphs. 
We require the following definitions.
For any positive integer $s$, an \emph{$s$-coloured graph} is a tuple $(G,c)$
where $G$ is a graph and $c:V(G) \rightarrow [s]$ is a map. Suppose $\cg{G} =
(G,c)$ and $\cg{G}' = (G',c')$ are coloured graphs with $G = (V,E)$ and $G' = (V',E')$. 

We say a map $\phi:V\rightarrow V'$ is a \emph{homomorphism} from $\cg{G}$ to $\cg{G}'$ if $\phi$ is a homomorphism from $G$ to $G'$ and, for all $v \in V$, $c(v) = c'(\phi(v))$. If $\phi$ is also bijective, we say $\phi$ is an \emph{isomorphism} from $\cg{G}$ to $\cg{G}'$, that $\cg{G}$ and $\cg{G}'$ are \emph{isomorphic}, and write $\cg{G} \simeq \cg{G}'$.
For all $X \subseteq V$, we define $\cg{G}[X] = (G[X], c|_X)$, and say $\cg{G}[X]$ is an \emph{induced subgraph} of $\cg{G}$. Given coloured graphs $\cg{H}$ and~$\cg{G}$, we denote the number of sets $X
\subseteq V(\cg{G})$ with $\cg{G}[X] \simeq \cg{H}$ by $\Ind{\cg{H}}{\cg{G}}$.
Finally, we define $V(\cg{G}) = V$ and $E(\cg{G}) = E$ and we define
$\Delta(\cg{G})$ to be the maximum degree
of~$G$.

For each positive integer~$\Delta$, we consider a counting version of the
induced subgraph isomorphism problem for coloured graphs of degree at
most~$\Delta$. 
\medskip

\myfprob{\sub}
{Two coloured graphs $\cg{H}$ and $\cg{G}$, each with maximum degree bounded by $\Delta$}
{$\Ind{\cg{H}}{\cg{G}}$}
{$|V(\cg{H})|$}

We will later reduce our bipartite independent set counting problems to the
coloured induced subgraph problem.
Note that \sub\ can be expressed as a first-order model-counting problem in
bounded-degree structures.
A well-known result of Frick~\cite[Theorem 6]{Frick2004} would yield an algorithm for
\sub\ with running time $g(k)\cdot n$, where $k=|V(\cg{H})|$ and $n=|V(\cg{G})|$.
(To our knowledge this fact has not appeared in the literature, but the proof is not hard.)
However, the function $g$ of Frick's algorithm may grow faster than any
constant-height tower of exponentials.
In the following, we provide an algorithm for \sub{} that is substantially faster: It runs in time $O(n k^{(2\Delta+3)k})$.

The algorithm follows the strategy of~\cite{hombasis2017} to count small subgraphs: Instead of counting (coloured) induced subgraphs, we can count (coloured) homomorphisms and recover the number of induced subgraphs via a simple basis transformation.
Transforming to homomorphisms is useful because homomorphisms from small patterns to bounded-degree host graphs can be counted by a simple branching procedure--this is however not true for small induced subgraphs. The following lemma encapsulates counting homomorphisms in graphs of bounded degree. Given coloured graphs $\cg{H}$ and $\cg{G}$, we denote the number of homomorphisms from $\cg{H}$ to $\cg{G}$ by $\Hom{\cg{H}}{\cg{G}}$.

\begin{lem}\label{lem:coloured homs in bounded degree}
  There is an algorithm to compute $\Hom{\cg{H}}{\cg{G}}$ in time~$O(n k^k (\Delta+1)^k)$, where~$\cg G$ is a coloured graph with~$n$ vertices, $\cg H$ is a coloured graph with~$k$ vertices, and both graphs have maximum degree at most~$\Delta$.
\end{lem}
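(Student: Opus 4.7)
The plan is to exploit the fact that a colour-preserving homomorphism from a disconnected pattern factors as an independent homomorphism on each connected component, and then to enumerate each such component's homomorphisms by branching along a BFS tree. Write $\cg{H}_1,\ldots,\cg{H}_s$ for the connected components of $\cg{H}$ and set $k_j := |V(\cg{H}_j)|$. Because a map $V(\cg{H})\to V(\cg{G})$ is a coloured homomorphism if and only if its restriction to each component is, we have
$$\Hom{\cg{H}}{\cg{G}} = \prod_{j=1}^{s} \Hom{\cg{H}_j}{\cg{G}},$$
so it suffices to bound the cost of counting coloured homomorphisms from each connected $\cg{H}_j$ to $\cg{G}$ and then multiply. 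Connectedness of $\cg{H}_j$ is what makes the branching efficient.

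For each $\cg{H}_j$, pick an arbitrary root $r_j\in V(\cg{H}_j)$ and compute a breadth-first search ordering $h_1 = r_j, h_2, \ldots, h_{k_j}$ of $V(\cg{H}_j)$; by construction, for every $i\ge 2$ there is a BFS parent $h_{p(i)}$ with $p(i)<i$ and $\{h_i,h_{p(i)}\}\in E(\cg{H}_j)$. The algorithm enumerates partial assignments $\phi$ in this order. It tries each of the (at most) $n$ vertices $v\in V(\cg{G})$ with the same colour as $h_1$ as a value for $\phi(h_1)$; then for each $i\ge 2$, once $\phi(h_{p(i)})$ is fixed, the homomorphism condition on the edge $\{h_{p(i)},h_i\}$ forces $\phi(h_i)\in\Gamma(\phi(h_{p(i)}))$, which has at most $\Delta$ elements, and we retain only those matching the colour of $h_i$. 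For each completed candidate we verify in time $O(k_j^2)$ that every edge of $\cg{H}_j$ (in particular the non-tree BFS edges) is mapped to an edge of $\cg{G}$, and if so increment the count.

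The search tree for component $\cg{H}_j$ therefore has at most $n\Delta^{k_j-1}$ leaves, and the cost of processing each leaf is $O(k_j^2)$, so the total time for $\cg{H}_j$ is $O(n k_j^2 \Delta^{k_j-1})$. Summing over $j$ and using $\sum_j k_j = k$ gives a total running time of $O(n k^3 \Delta^{k-1})$, which is comfortably within the claimed bound $O(n k^k(\Delta+1)^k)$.

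The main point requiring care is correctness of the enumeration: each coloured homomorphism $\phi$ must be counted exactly once. This holds because every such $\phi$ satisfies $\phi(h_i)\in\Gamma(\phi(h_{p(i)}))$ for every BFS tree edge, so $\phi$ is visited along the unique branch whose image sequence agrees with it; distinct choices at any level produce different assignments, so there is no overcounting, while the final edge verification filters out any candidate that violates a non-tree edge of $\cg{H}_j$. The colour constraint is handled by a trivial prune at each branching step and does not affect the asymptotic bound.
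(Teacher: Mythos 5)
Your proof is correct, and it takes the same high-level route as the paper's (reduce to connected components, then branch along a spanning tree of the pattern), but your branching rule is genuinely tighter. The paper fixes an arbitrary \emph{traversal} $u_1,\dots,u_k$ of the connected pattern (each $u_{i+1}$ may be any already-visited vertex or a neighbour of any already-visited vertex) and then brute-forces over \emph{all} traversal sequences $v_1,\dots,v_k$ in $\cg G$, of which there are at most $n\cdot(\Delta k + k)^{k-1}$; each candidate is then checked for being a homomorphism. You instead fix a BFS ordering and, at step $i$, constrain $\phi(h_i)$ to lie in $\Gamma(\phi(h_{p(i)}))$ of the already-assigned BFS parent, giving at most $\Delta$ branches per step rather than $\Delta k + k$. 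This pruned search tree has only $n\Delta^{k_j-1}$ leaves per component, so you get $O(n\,k^{O(1)}\Delta^{k-1})$ overall, improving the paper's $O(n\,k^k(\Delta+1)^k)$ from a $k^k$- to a $\mathrm{poly}(k)$-dependence on $k$ (and from $(\Delta+1)^k$ to $\Delta^{k-1}$). Both arguments correctly handle non-injective homomorphisms --- in your case, a previously used image vertex is still permissible at step $i$ provided it lies in $\Gamma(\phi(h_{p(i)}))$ --- and both defer the non-tree edges (and the paper also defers the tree edges) to a final verification pass. So your proof not only establishes the stated bound but in fact strengthens it.
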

\begin{proof}
  The algorithm works as follows:
  If~$\cg H$ is not connected, let~$\cg{H}_1,\dots,\cg{H}_\ell$ be its connected components. Then it is straightforward to verify that
  \begin{equation*}
    \Hom{\cg H}{G}
    =
    \prod_{i=1}^\ell \Hom{\cg{H}_i}{\cg G}\,.
  \end{equation*}
  Thus it remains to describe the algorithm for connected pattern graphs~$\cg H$.

  Let~$\cg H$ be connected.
  A sequence of vertices $v_1, \dots, v_k$ in a graph $F$ is a \emph{traversal} if, for all $i \in \{1,\dots,k-1\}$, the vertex $v_{i+1}$ is contained in $\{v_1,\dots,v_i\} \cup \Gamma(\{v_1, \dots, v_i\})$.
  Let $u_1, \dots, u_k$ be an arbitrary traversal of~$\cg H$ with
  $\{u_1,\dots,u_k\}=V(\cg{H})$; the latter property can be satisfied since~$\cg H$ is a connected graph with~$k$ vertices.
  Note that if $f:V(\cg{H})\rightarrow V(\cg{G})$ is a homomorphism from~$\cg H$ to~$\cg G$, then $f(u_1),\dots,f(u_k)$ is a traversal in $\cg{G}$, and this correspondence is injective.
  Thus the algorithm computes the number of traversals~$v_1,\dots,v_k$ in~$\cg G$ for which the mapping~$f$ with $f(u_i)=v_i$ for all $i$ is a homomorphism from~$\cg H$ to~$\cg G$.
  This number is equal to~$\Hom{\cg{H}}{\cg{G}}$, which the algorithm seeks to compute.

  Since the maximum degree of~$G$ is~$\Delta$, any set~${S \subseteq V(\cg{G})}$ satisfies $|\Gamma(S)| \le \Delta|S|$.
  Thus there are at most $n\cdot(\Delta k + k)^{k-1}$ traversal sequences in $\cg G$, which can be generated in linear time in the number of such sequences.
  For each traversal sequence, verifying whether the sequence corresponds to a homomorphism takes time $O(k\Delta)$ (in the word-RAM model with incidence lists for~$\cg H$ already prepared).
  Overall, we obtain a running time of $O(n\cdot k^k \cdot (\Delta + 1)^k)$.
\end{proof}

Using the above algorithm, we now construct an algorithm that performs a kind of basis transformation to obtain the number of induced coloured subgraphs.
\begin{thm}\label{thm:subbd}
  For all positive integers $\Delta$, there is a fixed-parameter tractable
  algorithm for \sub\ with time complexity
  $O(n \cdot k^{(2\Delta+3)\cdot k})$
  for $n$-vertex coloured graphs $\cg{G}$ and $k$-vertex coloured graphs $\cg{H}$.
\end{thm}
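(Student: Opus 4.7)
The strategy, as hinted in the discussion preceding the theorem, is the homomorphism basis approach of~\cite{hombasis2017}. The plan is to express
\[
\Ind{\cg{H}}{\cg{G}} \;=\; \sum_{\cg{F}} \alpha_{\cg{H},\cg{F}}\cdot \Hom{\cg{F}}{\cg{G}}
\]
as an explicit linear combination indexed by a family of coloured patterns $\cg{F}$, each with at most $k$ vertices, and then to compute every $\Hom{\cg{F}}{\cg{G}}$ via Lemma~\ref{lem:coloured homs in bounded degree} and combine them with the precomputed coefficients.

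I would construct the basis transformation in two steps. First, a standard inclusion-exclusion on edge sets gives, writing $\#\mathrm{Inj}$ for the number of injective colour-preserving homomorphisms,
\[
|\mathrm{Aut}(\cg{H})|\cdot \Ind{\cg{H}}{\cg{G}} \;=\; \sum_{\cg{H}'}(-1)^{|E(\cg{H}')|-|E(\cg{H})|}\,\#\mathrm{Inj}(\cg{H}'\to\cg{G}),
\]
where $\cg{H}'$ ranges over coloured graphs on the vertex set $V(\cg{H})$ with the same colouring as $\cg{H}$ and with $E(\cg{H}')\supseteq E(\cg{H})$. Since $\cg{G}$ has maximum degree at most $\Delta$, only those $\cg{H}'$ with $\Delta(\cg{H}')\le\Delta$ contribute a nonzero term. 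Second, Möbius inversion on the lattice $\Pi_c$ of colour-respecting partitions of $V(\cg{H}')$ yields
\[
\#\mathrm{Inj}(\cg{H}'\to\cg{G}) \;=\; \sum_{\rho \in \Pi_c}\mu(\hat{0},\rho)\cdot\Hom{\cg{H}'/\rho}{\cg{G}},
\]
where $\cg{H}'/\rho$ is the coloured quotient of $\cg{H}'$ by $\rho$ (any quotient containing a self-loop, arising when two adjacent vertices of $\cg{H}'$ are merged, contributes $0$ to $\Hom{}{\cg{G}}$ for the simple host $\cg{G}$ and can be discarded). Substituting the second identity into the first gives the desired basis representation, with $\cg{F}$ ranging over the quotients $\cg{H}'/\rho$ thereby obtained.

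The algorithm would then enumerate all eligible pairs $(\cg{H}',\rho)$, invoke Lemma~\ref{lem:coloured homs in bounded degree} once per quotient to compute $\Hom{\cg{H}'/\rho}{\cg{G}}$, and accumulate the results weighted by the precomputed signs and Möbius coefficients. For the running time, any admissible $\cg{H}'$ has at most $\Delta k/2$ edges, so there are at most $\binom{k(k-1)/2}{\Delta k/2}\le k^{\Delta k}$ such supergraphs; the number of partitions of $V(\cg{H}')$ is at most $B_k\le k^k$; and each call to Lemma~\ref{lem:coloured homs in bounded degree} costs $O(n\cdot k^k(\Delta+1)^k)$. Multiplying gives a total of $O(n\cdot k^{(\Delta+2)k}(\Delta+1)^k)$, and since $(\Delta+1)^k \le k^{(\Delta+1)k}$ for $k\ge 2$ (smaller $k$ being trivial), this absorbs into $O(n\cdot k^{(2\Delta+3)k})$, as required.

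The hardest part will be verifying the Möbius-theoretic identity carefully on the colour-respecting sublattice $\Pi_c$ (whose blocks are restricted to be monochromatic, changing the lattice structure but leaving the inversion valid), and confirming that the coefficients $\alpha_{\cg{H},\cg{F}}$ can be assembled in time depending only on $k$ so as not to disturb the stated bound, together with the bookkeeping needed to dispatch quotients containing self-loops as zero contributions.
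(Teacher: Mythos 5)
Your proposal is correct, but it reaches the homomorphism basis by a genuinely different route than the paper. The paper does not write down explicit coefficients at all: it sets up the linear system $A\boldsymbol{x}=\boldsymbol{b}$ where $\boldsymbol{x}$ collects $\Ind{\cg{K}}{\cg{G}}$ and $\boldsymbol{b}$ collects $\Hom{\cg{K}}{\cg{G}}$ over a set $S'$ of representatives of \emph{all} isomorphism classes of coloured graphs on at most $k$ vertices with maximum degree at most $\Delta$, with $A_{\cg{K},\cg{K}'}=\SHom{\cg{K}}{\cg{K}'}$ counting vertex-surjective homomorphisms; it then argues $A$ is upper triangular (up to permutation along the surjectability partial order) with non-zero diagonal, hence invertible, and reads off $\Ind{\cg{H}}{\cg{G}}=(A^{-1}\boldsymbol{b})_{\cg{H}}$. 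You instead make the inverse explicit via the classical two-step identity (inclusion--exclusion over edge supersets of $\cg{H}$, then M\"obius inversion over the colour-respecting partition lattice), so you only ever touch supergraphs of $\cg{H}$ and their quotients rather than all of $S'$; this buys you explicit, sign-explicit coefficients and spares you the invertibility argument, at the cost of the M\"obius bookkeeping you rightly flag, while the paper's version computes the whole vector of induced-subgraph counts in one solve and needs no M\"obius theory. Your counting of terms and the absorption into $O(n\cdot k^{(2\Delta+3)k})$ are sound. One small wrinkle you should patch: a quotient $\cg{H}'/\rho$ may have maximum degree exceeding $\Delta$ even when $\Delta(\cg{H}')\le\Delta$, so it does not literally satisfy the hypotheses of Lemma~\ref{lem:coloured homs in bounded degree}; this is harmless because the traversal algorithm there only uses the degree bound on the \emph{host} $\cg{G}$ for its running time, and the quotient still has at most $\Delta k/2$ edges so the per-traversal verification cost is unchanged, but you should say so explicitly rather than invoke the lemma as a black box.
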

\begin{proof}

  Let $(\cg{H},\cg{G})$ be an instance of \sub, write $\cg{G} = (G,c)$ and $\cg{H} = (H,c')$, and let $k$ be the number of vertices of $\cg{H}$.
  Without loss of generality, suppose that the ranges of $c$ and $c'$ are~$[q]$ for some positive integer $q \le k$.
  Namely, if any vertices of $G$ receive colours not in the range of~$c'$, then our
  algorithm may remove them without affecting $\Ind{\cg{H}}{\cg{G}}$; if any
  vertices of $H$ receive colours not in the range of~$c$, then
  $\Ind{\cg{H}}{\cg{G}} = 0$.

  For coloured graphs $\cg{K}$ and $\cg{B}$, let $\SHom{\cg{K}}{\cg{B}}$ be the number of vertex-surjective homomorphisms from $\cg{K}$ to $\cg{B}$, i.e., the number of those homomorphisms from $\cg{K}$ to $\cg{B}$ that contain all vertices of $\cg{B}$ in their image.
  
  Let $S$ be the set of all $q$-coloured graphs
  $\cg{K}$ such that $\Delta(\cg{K}) \le \Delta$ and, for some $t \in [k]$,
  $V(\cg{K}) = [t]$. Let $S'$ be a set of representatives of (coloured) isomorphism classes of~$S$. 
  
  Let $\boldsymbol{x}$ be the vector indexed by $S'$ such that $\boldsymbol{x}_{\cg{K}} = \Ind{\cg{K}}{\cg{G}}$ for all $\cg{K} \in S'$. This vector contains the number of induced subgraph copies of $\cg{H}$ in $\cg{G}$, but it also contains the number of subgraph copies of all other graphs in $S'$ in $\cg{G}$.
  Let $\boldsymbol{b}$ be the vector indexed by $S'$ such that $\boldsymbol{b}_{\cg{K}} = \Hom{\cg{K}}{\cg{G}}$ for all $\cg{K} \in S'$; each entry of this vector can be computed via the algorithm of Lemma~\ref{lem:coloured homs in bounded degree}. Then we will show that $\boldsymbol{x}$ and $\boldsymbol{b}$ can be related to each other via an invertible matrix $A$ such that $A\boldsymbol{x} = \boldsymbol{b}$. By calculating $A$ and $\boldsymbol{b}$, we can then output $\Ind{\cg{H}}{\cg{G}} = (A^{-1}\boldsymbol{b})_{\cg{H}}$.
    
  To elaborate on this linear relationship between induced subgraph and homomorphism numbers, let us first consider some arbitrary graph $\cg{K} \in S'$. By partitioning the homomorphisms from $\cg{K}$ to $\cg{G}$ according to their image, we have
  \[
  \Hom{\cg{K}}{\cg{G}} = \sum_{\substack{X \subseteq V(\cg{G})\\|X| \le k}}\SHom{\cg{K}}{\cg{G}[X]}.
  \]
  In the right-hand side sum, we can collect terms with isomorphic induced subgraphs $\cg{G}[X]$, since we clearly have $\SHom{\cg{K}}{\cg{B}} = \SHom{\cg{K}}{\cg{B}'}$ if ${\cg{B}} \simeq {\cg{B}'}$. Hence, we obtain
  \begin{equation}\label{eqn:subbd}
  \Hom{\cg{K}}{\cg{G}} = \sum_{\cg{K}'\in S'}\SHom{\cg{K}}{\cg{K}'} \cdot \Ind{\cg{K}'}{\cg{G}}.
  \end{equation}
  Let $A$ be the matrix indexed by $S'$ with $A_{\cg{K},\cg{K}'} = \SHom{\cg{K}}{\cg{K}'}$ for all $\cg{K},\cg{K}' \in S'$. Then \eqref{eqn:subbd} implies that $A\boldsymbol{x} = \boldsymbol{b}$. (An uncoloured version of this linear system is originally due to Lov\'{a}sz~\cite{DBLP:books/daglib/0031021}.)
  
  We next prove that $A$ is invertible. Indeed, given $\cg{K},\cg{K}' \in S'$, write $\cg{K} \lesssim \cg{K}'$ if $\cg{K}$ admits a vertex-surjective homomorphism to $\cg{K}'$. Since $\lesssim$ is a partial order, as is readily verified, it admits a topological ordering $\pi$. Permuting the rows and columns of $A$ to agree with $\pi$ does not affect the rank of $A$, and it yields an upper triangular matrix with non-zero diagonal entries, so it follows that $A$ is invertible.
  
  The algorithm is now immediate. It first determines $S$ by listing all $q$-coloured graphs on at most $k$ vertices with at most $\floor{\Delta k/2}$ edges, then checking each one to see whether it satisfies the degree condition. It then determines $S'$ from $S$ by testing every pair of coloured graphs in~$S$ for isomorphism (by brute force). It then determines each entry $A_{\cg{K},\cg{K}'}$ of $A$ (by brute force) by listing the vertex-surjective maps $\cg{K}\rightarrow\cg{K}'$. It then determines $\boldsymbol{b}$ by invoking Lemma~\ref{lem:coloured homs in bounded degree} to compute each entry $\boldsymbol{b}_{\cg{K}} = \Hom{\cg{K}}{\cg{G}}$ for ${\cg{K}} \in S'$. Finally, it outputs $\Ind{\cg{H}}{\cg{G}} = (A^{-1}\boldsymbol{b})_{\cg{H}}$.
  
  \emph{Running time.} All arithmetic operations are applied to integers bounded by $n^k$, so they each fit into $O(k)$ words, and we bound the complexity of each operation crudely by $O(k^2)$. 
  The number of $q$-coloured graphs on $t$ vertices with at most $\floor{\Delta k/2}$ edges is at most
  \[
  k\cdot q^k \cdot \sum_{m=0}^{\floor{\Delta k/2}} \binom{\binom{k}{2}}{m} \le k \cdot k^k \cdot \frac{\Delta k}{2} \cdot k^{2\floor{\Delta k/2}} 
  = O(k^{2+(\Delta+1)k}) \mbox{ as a function of }k,
  \]
  so our algorithm determines $S$ in time $O(k^{(\Delta+2)k})$ and $|S| = O(k^{2+(\Delta+1)k})$. Moreover, checking whether two graphs in $S$ are isomorphic by brute force requires $O(k^2\cdot k!)$ time, so our algorithm determines $S'$ in time $O(|S|^2k^2\cdot k!) = O(k^{(2\Delta+3)k})$ time. In determining $A$, the algorithm checks at most $k!$ possible homomorphisms for each of $|S'|^2$ pairs of graphs, so it again takes time $O(k^{(2\Delta+3)k})$.
  In determining $\boldsymbol{b}$, the algorithm computes $|S'| = O(k^{2+(\Delta+1)k})$ entries in total, each of which takes time $O(n k^k (\Delta+1)^k)$, so in total it takes time $O(nk^{(\Delta+3)k})$.
  Finally, it takes $O(k^2|S'|^2) = O(k^{(2\Delta+3)k})$ time to invert $A$ and determine $\boldsymbol{x}$ (since $A$ can be put into upper triangular form by permuting rows and columns). Overall, the running time of the algorithm is $O(nk^{(2\Delta+3)k})$, as claimed.
\end{proof}

We note that the above algorithm is not limited to host graphs of bounded degree. 
That is, the same approach can be taken for any host graph class for which counting homomorphisms from (vertex-coloured) patterns with $k$ vertices has an $f(k)\cdot n^{O(1)}$ time algorithm. To this end, simply use this algorithm as a sub-routine instead of Lemma~\ref{lem:coloured homs in bounded degree} in the algorithm constructed in the proof of Theorem~\ref{thm:subbd}.
Examples for such classes of host graphs are planar graphs or, more generally, any graph class of bounded local treewidth \cite{Frick2004}.

\begin{thm}\label{thm:kbisbd}
  For all positive integers $\Delta$:
  \begin{enumerate}[(i)]
    \item $\kbis[\Delta]$ has an algorithm with time complexity $O(|V(G)| \cdot k^{(2\Delta+3)k})$;
    \item $\klbis[\Delta]$ has an algorithm with time complexity $O(|V(G)|\cdot\ell^{\ell(2\Delta^2+8\Delta+4)})$;
    \item $\klmaxbis[\Delta]$ has an algorithm with time complexity $O(|V(G)|\cdot\ell^{\ell(2\Delta^2+8\Delta+4)})$.
  \end{enumerate}
\end{thm}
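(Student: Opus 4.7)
For part (i), the plan is immediate: take $\cg H$ to be the edgeless graph on $k$ vertices with an arbitrary single colour, take $\cg G$ to be $G$ with all vertices assigned that same colour, and observe that $\Ind{\cg H}{\cg G}=\IS_k(G)$. The bound $O(|V(G)|\cdot k^{(2\Delta+3)k})$ then follows directly from Theorem~\ref{thm:subbd}.

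For parts (ii) and (iii), the plan is to reduce each problem to several applications of Theorem~\ref{thm:subbd}, exploiting the bipartite structure. Since $V$ is itself an independent set, any $T\subseteq V\setminus\Gamma(S)$ can be appended to $S\subseteq U$ to form an $\ell$-left independent set, which yields the decoupling
\begin{equation*}
\LIS{\ell}(G) \;=\; \sum_{S\subseteq U,\,|S|=\ell} 2^{\,|V|-|\Gamma(S)|}\,.
\end{equation*}
I would colour $U$ with colour~$1$ and $V$ with colour~$2$ to obtain a $2$-coloured graph~$\cg G$, and group the size-$\ell$ subsets $S\subseteq U$ by the coloured isomorphism type of $\cg G[S\cup\Gamma(S)]$. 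The relevant types~$\cg H$ are bipartite $2$-coloured graphs with $\ell$ colour-$1$ vertices (forming an independent set), at most $\ell\Delta$ colour-$2$ vertices (each with at least one colour-$1$ neighbour in $\cg H$), and maximum degree at most~$\Delta$. Writing $r(\cg H)$ for the number of size-$\ell$ subsets of type $\cg H$ and $j(\cg H)$ for the number of colour-$2$ vertices of $\cg H$, the target outputs are $\LIS{\ell}(G)=\sum_{\cg H} r(\cg H)\,2^{|V|-j(\cg H)}$ for (ii), and $\MAXLIS{\ell}(G)=\sum_{\cg H\,:\,j(\cg H)=j_{\min}} r(\cg H)$ for (iii), where $j_{\min}$ is the minimum of $j(\cg H)$ over types with $r(\cg H)>0$ (so that $\maxleft{\ell}(G)=\ell+|V|-j_{\min}$).

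To compute the $r(\cg H)$, the plan is to recover them from the induced-subgraph counts $\Ind{\cg H}{\cg G}$ supplied by Theorem~\ref{thm:subbd} applied with parameter $k\le\ell(\Delta+1)$. Every set $X\subseteq V(\cg G)$ with $\cg G[X]\simeq\cg H$ arises from a unique $S$ whose ``full type'' $\cg H'=\cg G[S\cup\Gamma(S)]$ extends $\cg H$ by additional colour-$2$ vertices and incident edges; this yields the triangular relation
\begin{equation*}
\Ind{\cg H}{\cg G} \;=\; \sum_{\cg H'\,\succeq\,\cg H} M(\cg H,\cg H')\, r(\cg H')\,,
\end{equation*}
where $\succeq$ orders patterns by such extensions and $M(\cg H,\cg H')$ counts the colour-$2$ subsets of $\cg H'$ whose induced subgraph in $\cg H'$ is isomorphic to $\cg H$. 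Ordering patterns by $|V(\cg H)|$ makes this system upper triangular with nonzero diagonal entries, so I can invert it to read off every $r(\cg H)$. The main obstacle I foresee is the running-time accounting: bounding the number of admissible patterns, analysing each call to Theorem~\ref{thm:subbd} at pattern size up to $\ell(\Delta+1)$ (for which the raw cost is roughly $(\ell(\Delta+1))^{(2\Delta+3)\ell(\Delta+1)}$), and adding the triangular-inversion cost, then absorbing the $(\Delta+1)$ factors coming from the pattern size into the claimed exponent $(2\Delta^2+8\Delta+4)\ell$ by elementary estimates.
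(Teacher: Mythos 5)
Your proposal is correct and follows essentially the same approach as the paper's proof: part (i) is identical, and for (ii)--(iii) you use the same decoupling $\LIS{\ell}(G)=\sum_{|S|=\ell}2^{|V|-|\Gamma(S)|}$, the same $2$-colouring of $G$, and the same bottleneck set of calls to Theorem~\ref{thm:subbd} on bipartite $2$-coloured patterns with $\ell$ colour-$1$ and up to $\Delta\ell$ colour-$2$ vertices. The one place you diverge is the post-processing: the paper aggregates by $r=|\Gamma(S)|$ only, introduces $N'_{\ell,r}(G)=\sum_{\cg K}\Ind{\cg K}{\phi(G)}$, and recovers $N_{\ell,r}(G)$ via the short binomial recursion $N_{\ell,r}=N'_{\ell,r}-\sum_{i>r}\binom{i}{r}N_{\ell,i}$ (a $(\Delta\ell+1)$-dimensional triangular system), whereas you invert a much larger triangular system indexed by the full isomorphism types $\cg H'=\cg G[S\cup\Gamma(S)]$; this is correct (it is a coloured analogue of the M\"obius inversion between induced-subgraph and ``full-type'' counts) but heavier than necessary, since only the aggregated $N_{\ell,r}$ are needed for \eqref{eqn:kbisbd-left}, and the inversion cost is dominated anyway by the Theorem~\ref{thm:subbd} calls, so both versions give the same exponent once the accounting is carried out as in the paper.
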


Recent independent work by Patel and Regts \cite{DBLP:journals/corr/PatelR16} implicitly contains an algorithm for counting independent sets of size $k$ in graphs of maximum degree $\Delta$ in time $O(c^k n)$, where $c$ is a constant depending on $\Delta$. This implies Theorem~\ref{thm:kbisbd}(i). Since our own proof is very short, we provide it for the benefit of the reader. Subsequent work~\cite{PR-new}, published after our original paper~\cite{CDFGL-old}, includes a version of Theorem~\ref{thm:subbd} with running time $\tilde{O}((4\Delta)^{2k}n)$ (which is essentially best-possible under ETH). Note that using this result in the proof of Theorem~\ref{thm:kbisbd}(ii) and (iii) in place of Theorem~\ref{thm:subbd} would not yield algorithms with running times $n\cdot \ell^{o(\ell)}$, as the quantity $|\mathcal{S}_{\ell,r}'|$ defined in the proof is $\ell^{\Omega(\ell)}$ when $\Delta=3$ (for suitable values of $r$).

\begin{proof}
  \textbf{Proof of part (i):} This is immediate from Theorem~\ref{thm:subbd}, since \kbis[$\Delta$] is a special case of \sub\ (taking $\cg{G}$ to be monochromatic and $\cg{H}$ to be a monochromatic independent set of size $k$).
  
  \noindent\medskip \textbf{Proof of part (ii):} For any bipartite graph $G = (U,V,E)$ with degree at most $\Delta$ and any non-negative integers~$\ell$ and~$r$, let $N_{\ell,r}(G)$ be the number of sets $X \subseteq U$ with $|X|=\ell$ and $|\Gamma(X)| = r$. Let $N_{\ell,r}'(G)$ be the number of pairs of sets $X \subseteq U$, $Y \subseteq V$ such that $|X| = \ell$, $|Y| = r$ and $Y \subseteq \Gamma(X)$. Then we have
  \begin{equation}\label{eqn:kbisbd-0}
    N_{\ell,r}(G) = N_{\ell,r}'(G) - \sum_{i=r+1}^{\Delta\ell} \binom{i}{r}N_{\ell,i}(G).
  \end{equation}
  For any bipartite graph $J = (U_J, V_J, E_J)$, we define the corresponding 2-colouring by $c_J(v) = 1$ for all $v \in U_J$ and $c_J(v) = 2$ for all $v \in V_J$. We define the corresponding coloured graph by $\phi(J) = ((U_J \cup V_J, \{\{u,v\} \mid (u,v) \in E_J\}), c_J)$. Let $S_{\ell,r}$ be the set of all bipartite graphs $J = (U_J,V_J,E_J)$ with $U_J = [\ell]$, $V_J = \{\ell+1, \dots, \ell+r\}$, degree at most $\Delta$ and no isolated vertices in $V_J$. Let $\cg{S}_{\ell,r}$ be the corresponding set of coloured graphs, and let $\cg{S}'_{\ell,r}$ be a set of representatives of (coloured) isomorphism classes in $\cg{S}_{\ell,r}$. Then $N_{\ell,r}'(G) = \sum_{\cg{K} \in \cg{S}'_{\ell,r}} \Ind{\cg{K}}{\phi(G)}$, and hence by~\eqref{eqn:kbisbd-0} we have
  \begin{equation}\label{eqn:kbisbd}
  N_{\ell,r}(G) = \sum_{\cg{K} \in \cg{S}'_{\ell,r}} \Ind{\cg{K}}{\phi(G)} - \sum_{i=r+1}^{\Delta\ell} \binom{i}{r}N_{\ell,i}(G).
  \end{equation}
  
  Now suppose that $(G,\ell)$ is an instance of $\klbis[\Delta]$. Then we have
  \begin{equation}\label{eqn:kbisbd-left}
    \LIS{\ell}(G) = \sum_{\substack{X \subseteq U\\|X|=\ell}}2^{|V|-|\Gamma(X)|} = \sum_{r=0}^{\Delta\ell} N_{\ell,r}(G)2^{|V|-r}.
  \end{equation}
  To compute $N_{\ell,\Delta\ell}(G), \dots, N_{\ell,0}(G)$, our algorithm
  applies \eqref{eqn:kbisbd}. For each $r \in \{\Delta\ell, \dots, 0\}$, it
  determines the $\Ind{\cg{K}}{\phi(G)}$ terms of \eqref{eqn:kbisbd} using
  the \sub\ algorithm of Theorem~\ref{thm:subbd}, and the remaining terms of
  \eqref{eqn:kbisbd} recursively with dynamic programming. Finally, it computes
  $\LIS{\ell}(G)$ using~\eqref{eqn:kbisbd-left}. 
  
  To determine the time
  complexity, first note that $|S_{\ell,r}| \le \binom{\Delta\ell^2}{\Delta\ell}
  = O(\ell^{3\Delta\ell})$ holds for all $r \in \{\Delta\ell, \dots, 0\}$. The algorithm therefore determines $\cg{S}_{\ell,r}'$ by brute force in time $O(|S_{\ell,r}|^2(\ell+\Delta\ell)^{\ell+\Delta\ell}) = O(\ell^{\ell(8\Delta+2)})$. The algorithm then calculates each $N_{\ell,r}(G)$ in time 
  \[
  O(|S_{\ell,r}'|\cdot|V(G)| \cdot (\ell+\Delta\ell)^{(2\Delta+3)(\ell+\Delta\ell)}) = O(|V(G)| \cdot \ell^{\ell(2\Delta^2+8\Delta+4)}).
  \]
  The overall running time is therefore $O(|V(G)|\cdot\ell^{\ell(2\Delta^2+8\Delta+4)})$, so part (ii) of the result follows.
  
  \noindent\medskip{Proof of part (iii):} Finally, suppose that $(G,\ell)$ is an instance of $\klmaxbis[\Delta]$. Let $\mu = \min\{r \mid N_{\ell,r}(G) \ne 0\}$, and note that $\MAXLIS{\ell}(G) = N_{\ell,\mu}(G)$. As above, our algorithm determines $N_{\ell,\Delta\ell}(G), \dots, N_{\ell,0}(G)$ using \eqref{eqn:kbisbd}, and thereby determines and outputs $N_{\ell,\mu}(G)$. The overall running time is again $O(|V(G)|\cdot\ell^{\ell(2\Delta^2+8\Delta+4)})$, so part (iii) of the result follows.
\end{proof}

\bibliographystyle{plain}
\bibliography{\jobname}
\end{document}